  \providecommand\BibTeX{{%
    \normalfont B\kern-0.5em{\scshape i\kern-0.25em b}\kern-0.8em\TeX}}}
\let\oldnl\nl
\newcommand{\nonl}{\renewcommand{\nl}{\let\nl\oldnl}}%
\newcommand{\daniel}[1]{\textcolor{red}{}}
\newcommand{\amir}[1]{\textcolor{orange}{}}
\newcommand{\ariel}[1]{\textcolor{blue}{}}
\newcommand{\yuval}[1]{\textcolor{purple}{}}
\newcommand{\reva}[1]{\leavevmode\color{black}{#1}}
\newcommand{\revb}[1]{\leavevmode\color{black}{#1}}
\newcommand{\revc}[1]{\leavevmode\color{black}{#1}}
\newcommand{\common}[1]{\leavevmode\color{black}{#1}}
\newcommand{\bbN}{\ensuremath{\mathbb{N}}}
\newcommand{\NX}{\ensuremath{\bbN[X]}}
\newcommand{\bbB}{\ensuremath{\mathbb{B}}}
\newcommand{\BX}{\ensuremath{\bbB[X]}}
\newcommand{\trio}{$Trio(X)$}
\newcommand{\posbool}{$PosBool(X)$}
\newcommand{\lin}{$Lin(X)$}
\newcommand{\why}{$Why(X)$}
\newcommand{\ex}{$K$-example}
\newcommand{\rel}{$K$-relation}
\newcommand{\db}{$K$-database}
\newcommand{\sM}{\!_M}
\newcommand{\sK}{\!_K}
\newcommand{\absEx}[1]{\ensuremath{\widetilde{#1}}}
\newtheorem{theorem}{Theorem}[section]
\newtheorem{proposition}[theorem]{Proposition}
\newtheorem{example}[theorem]{Example}
\newtheorem{definition}[theorem]{Definition}
\let\OLDthebibliography\thebibliography
\renewcommand\thebibliography[1]{
  \OLDthebibliography{#1}
  \setlength{\parskip}{0pt}
  \setlength{\itemsep}{0pt plus 0.3ex}
}
\definecolor{keywords}{rgb}{0,0,0.7}
\begin{document}


\title{On Optimizing the Trade-off between Privacy and Utility in \\ Data Provenance}





\author{
  Daniel Deutch \\ Tel Aviv University \\ \small danielde@tauex.tau.ac.il
  \and
  Ariel Frankenthal \\ Tel Aviv University \\ \small frankenthal@mail.tau.ac.il
  \and
  Amir Gilad\\ Duke University \\ \small agilad@cs.duke.edu
  \and
  Yuval Moskovitch \\ University of Michigan \\ \small yuvalm@umich.edu
}
\date{}

\maketitle

\begin{abstract}

Organizations that collect and analyze data may wish or be mandated by regulation to justify and explain their analysis results. At the same time, the {\em logic} that they have followed to analyze the data, i.e., their queries, may be proprietary and confidential. Data provenance, a record of the transformations that data underwent, was extensively studied as means of explanations. In contrast, only a few works have studied the tension between disclosing provenance and hiding the underlying query. 

This tension is the focus of the present paper, where we formalize and explore for the first time the tradeoff between the utility of presenting provenance information and the breach of privacy it poses with respect to the underlying query. Intuitively, our formalization is based on the notion of provenance abstraction, where the representation of some tuples in the provenance expressions is abstracted in a way that makes multiple tuples indistinguishable. The privacy of a chosen abstraction is then measured
based on how many queries match the obfuscated provenance, in the same vein as $k$-anonymity. The utility is measured based on the entropy of the abstraction, intuitively how much information is lost with respect to the actual tuples participating in the provenance. Our formalization yields a novel optimization problem of choosing the best abstraction in terms of this tradeoff. 
We show that the problem is intractable in general, but design greedy heuristics that exploit the provenance structure towards a practically efficient exploration of the search space. We experimentally prove the effectiveness of our solution using the TPC-H benchmark \reva{and the IMDB dataset}.



\end{abstract}

\section{Introduction}\label{sec:intro}


Data provenance, namely a record of the transformations that pieces of data underwent when processed by a query, has been the subject of extensive investigation in recent years \cite{trio,GKT-pods07,Userssemiring1,CheneyProvenance,ProvenanceBuneman,Olteanu12,Tan03}. Most of these works focus on the utility of provenance, showing that it is highly effective for applications such as hypothetical reasoning \cite{ArabGKRG16,AssadiKLT16,DeutchMR19}, explaining and justifying query results \cite{DeutchFG20,why,ChapmanJ09}, and others. The cost of provenance tracking is typically measured in terms of the execution time / memory overhead it incurs, and significant research effort has been dedicated to optimizing such computational aspects. 
In this paper, we shed light on a different kind of cost incurred by publishing provenance: the exposure of the {\em query} that has been executed and for which provenance has been tracked. We ask: {\em can we obfuscate provenance so that it remains useful, while hiding the underlying query?}  

\revc{
This aspect of provenance has become increasingly important as more and more agencies and organization aim to provide explanations for their decisions \cite{google-ads,facebook-ads} while governmental bodies and research communities stress the need for privacy-aware mechanisms \cite{privacy_opinion,un_privacy,ricciato2019trusted}. 
}



\begin{figure}[!htb]
    \centering \scriptsize
    \begin{minipage}{\linewidth}
        \centering
        \caption*{Interests}\label{tbl:trips}
        \begin{tabular}{ | c | c | c | c |}
            \cline{2-4}\multicolumn{1}{c|}{} & PID & Interest & Source\\
            \hline $i_1$ & 1 & Music & WikiLeaks\\
            \hline $i_2$ & 2 & Music & Facebook\\
            \hline $i_3$ & 3 & Music & LinkedIn\\
            \hline $i_4$ & 1 & Parties & WikiLeaks\\
            \hline $i_5$ & 2 & Parties & Facebook\\
            \hline $i_6$ & 4 & Movies & WikiLeaks\\
            \hline
        \end{tabular}
    \end{minipage}
    
    \begin{minipage}{\linewidth}
        \centering
        \caption*{Hobbies}\label{tbl:music}
        \begin{tabular}{ | c | c | c | c |}
            \cline{2-4}\multicolumn{1}{c|}{} & PID & Hobby & Source\\
            \hline $h_1$ & 1 & Dance & Facebook\\
            \hline $h_2$ & 2 & Dance & LinkedIn\\
            \hline $h_3$ & 4 & Dance & Facebook\\
            \hline $h_4$ & 1 & Trips & Facebook\\
            \hline $h_5$ & 2 & Trips & LinkedIn\\
            \hline $h_6$ & 3 & Trips & WikiLeaks\\
            \hline
        \end{tabular}
    \end{minipage}
    
    \begin{minipage}{1\linewidth}
        \centering
        \caption*{Persons}\label{tbl:persons}
        \begin{tabular}{ | c | c | c | c |}
            \cline{2-4}\multicolumn{1}{c|}{} & PID & Name & Age\\
            \hline $p_1$ & 1 & James T & 27\\
            \hline $p_2$ & 2 & Brenda P & 31\\
            \hline
        \end{tabular}
    \end{minipage}
    
    \caption{Partial Database instance of hobbies and interests of people collected from different sources}\label{fig:db}
\end{figure}

\begin{example}\label{ex:runningExample}
Consider an online advertising company that wishes to match ads to people. Their database contains information about people, their hobbies and interests, a sample of which appears in Figure~\ref{fig:db}. Each tuple has an identifier, appearing to its left. The company may run queries such as $Q_{real}$ appearing in Table \ref{fig:queries} looking for people that like dancing and music. The query output includes James and Brenda, and relevant advertisements may then be presented to them. Upon request, Brenda may receive an explanation of why the advertisement was shown to her (see e.g., \cite{google-ads,facebook-ads}). 
\revb{In the case where James and Brenda are friends, they may obtain each other explanation in addition to their own.}
However, the company may wish to avoid 
disclosing the general criteria (i.e., the query $Q_{real}$), since these criteria are part of the company's confidential business strategy.
\end{example}

The provenance of a given query result describes the tuples used by the query to derive the result and the manner in which they were used. 
We use here the well-established model of {\em provenance semirings} \cite{GKT-pods07}. 

\begin{table}
    \centering \footnotesize
    \caption{Queries for the running example. $Q_{real}$ is the original, $Q_{false1}$, $Q_{false2}$ are similar but not identical, and $Q_{general}$ is a generalization of the original}\label{fig:queries}
    \begin{tabularx}{\linewidth}{| c | X | c | c | c | c |}
        \hline {\bf Name} & {\bf Query} \\
        \hline $Q_{real}$ & $Q$(id) :- Person(id,name,age), Hobbies(id,`Dance',src1), Interests(id,`Music',src2) \\
        \hline $Q_{false1}$ & $Q$(id) :- Person(id,name,age), Hobbies(id,`Trips',src1), Interests(id,`Music',src2) \\
        \hline $Q_{false2}$ & $Q$(id) :- Person(id,name,age), Hobbies(id,`Dance',src1), Interests(id,`Parties',src2) \\
        \hline $Q_{general}$ & $Q$(id) :- Person(id,name,age), Hobbies(id,`Dance',src1), Interests(id,interest,src2) \\
        \hline
    \end{tabularx}
\end{table}

\begin{figure}[!htb]
    \centering \footnotesize
    \begin{subfigure}{\linewidth}
        \centering
        \begin{tabular}{| c | c |} 
            \hline Output & Provenance \\ [0.5ex]
            \hline 1 & $p_1 \cdot h_1 \cdot i_1$\\ 
            \hline 2 & $p_2 \cdot h_2 \cdot i_2$\\ 
            \hline
        \end{tabular}
        \subcaption{$Ex_{real}$} \label{fig:ex-real}
    \end{subfigure}
    
    \begin{subfigure}{\linewidth}
        \centering
        \begin{tabular}{| c | c |} 
            \hline Output & Provenance \\ [0.5ex]
            \hline 1 & $p_1 \cdot h_4 \cdot i_1$\\ 
            \hline 2 & $p_2 \cdot h_5 \cdot i_2$\\ 
            \hline
        \end{tabular}
        \subcaption{$Ex_{false1}$} \label{fig:ex-false1}
    \end{subfigure}
    
    \begin{subfigure}{\linewidth}
        \centering
        \begin{tabular}{| c | c |} 
            \hline Output & Provenance \\ [0.5ex]
            \hline 1 & $p_1 \cdot h_1 \cdot i_4$\\ 
            \hline 2 & $p_2 \cdot h_2 \cdot i_5$\\ 
            \hline
        \end{tabular}
        \subcaption{$Ex_{false2}$} \label{fig:ex-false2}
    \end{subfigure}
    \newline
    \caption{\ex s. $Ex_{real}$, $Ex_{false1}$ and $Ex_{false2}$ are the outputs of $Q_{real}$, $Q_{false1}$ and $Q_{false2}$, respectively} \label{fig:k-examples}
\end{figure}

\begin{example}\label{ex:prov_intro}
The provenance of the output tuple $(1)$ according to the query $Q_{real}$ shown in Table \ref{fig:queries} is presented in the first row of Figure \ref{fig:ex-real}. The expression, formulated as a product of the annotations $p_1, h_1, i_1$, intuitively means that the three tuples with these annotations in the database (Figure \ref{fig:db}) have jointly participated in an assignment to $Q_{real}$ that yielded this result.
\end{example}

We denote by {\em \ex} a subset (``example") of the results of a (hidden) query and an explanation for each result, formulated as its provenance (e.g., Figure \ref{fig:ex-real} shows \ex\ derived by $Q_{real}$, \revb{modeling the explanations for James and Brenda}).
Given a \ex, the problem we address is {\em how to modify the provenance in a way that still allows users to gain information from it, but without divulging the underlying query that produced it?}


We next detail the main components of our solution.






{\bf Obfuscating provenance through abstraction.~} We propose a simple way to obfuscate provenance, based on {\em provenance abstraction}. The main idea is to allow identification of multiple provenance annotations, replacing them with a common ``meta-annotation". Not all such identifications make sense in general, and so their choice is constrained by a tree whose leaves correspond to actual annotations and ancestors can be used as abstractions of their descendants. This technique has recently been proposed in \cite{DeutchMR19}, where it was used in a different context of reducing the provenance size.  


{\bf Quantifying loss of information.~} 
We use entropy \cite{shannon1948mathematical} to quantify the loss of information incurred by a choice of provenance abstraction. Information entropy expresses the level of uncertainty of a given data. In our context, we wish to measure ``how uncertain" is a viewer of the abstracted provenance expression, with respect to the actual one (each possibility for the actual provenance, given an abstraction, is called a concretization). We assume a given distribution over the concretizations. Lacking additional knowledge, this distribution may simply be taken as uniform. The entropy for an abstraction is then defined with respect to a tree and a distribution.      


{\bf Model for provenance privacy.~} Recall that our goal is to show an abstraction of a given \ex, while hiding the query that yielded the \ex.  To measure the privacy of an abstraction, we may thus look at the set of its possible concretizations, and then at the set of queries that would have yielded each concretization. In fact, not all such queries are ``interesting": we may restrict attention to 
connected inclusion-minimal queries \cite{DeutchG19}, i.e., queries whose join graph is connected and are not included in any other query in this set. These queries are representative of the viable options for the hidden query. We then define the privacy incurred by an abstraction as the cardinality of this set (i.e., how many connected inclusion-minimal queries match some concretization).


{\bf The problem of optimizing abstractions.~} 
The last two components are then combined to define the problem introduced and studied in this paper: given an example of query results and their provenance $Ex$, a provenance abstraction tree $T$, and a privacy threshold $k$, we aim at finding an abstraction that has at least $k$ connected inclusion minimal queries that `can fit' it, and minimizes the loss of information among all such abstractions. 



\begin{example}\label{ex:intro3}
Consider the \ex\ $Ex_{real}$ presented in Figure~\ref{fig:ex-real} showing two outputs of the query $Q_{real}$ and their provenance. The allowed abstractions are defined based on the tree $T$ depicted in Figure \ref{fig:abstree}. 
The leaves of $T$ are annotations (identifiers) of the tuples in Figure \ref{fig:db}, and its inner nodes are abstracted forms of these annotations. An abstraction of the provenance in $Ex_{real}$ w.r.t. $T$ may, e.g., replace the annotation $h_1$ with its ancestors {\em Facebook} or {\em Social Network}. Other tuple annotations may be abstracted as well. 
A choice of abstraction dictates a certain amount of information loss since the annotation {\em Facebook} can stand for any one of the annotations $h_1, h_3, h_4, i_2, i_5$, and when viewing the annotation {\em Facebook} we cannot be sure which annotation is the original. 
At the same time, it may obfuscate the underlying query $Q_{real}$, as more queries become consistent with the observable provenance information. 

\end{example}

\begin{figure}[ht]
\centering
\begin{tikzpicture}[thick, scale=.7,
    ssnode/.style={fill=mygreen,node distance=.1cm,draw,circle, minimum size=.1mm},
    -,shorten >= 2pt,shorten <= 2pt,
    level 1/.style={level distance=0.2cm},
    every  tree node/.style = {shape=rectangle, rounded corners, draw, align=center, top color=white, bottom color=blue!20},
    map/.style={shape=rectangle, rounded corners, draw, align=center, top color=white, bottom color=red!20},
    log/.style={shape=rectangle, rounded corners, draw, align=center, top color=white, bottom color=orange!60},
    level 2/.style={sibling distance=7mm},
    level 3/.style={sibling distance=1mm},
    level distance=1.1cm,
    level 4/.style={sibling distance=1mm},
    level 1/.style={level distance=0.7cm}
    ]
  \Tree
    [.\node[] (r) {$\star$};
    [.\node[] (a5) {Social Network}; 
    [.\node[] (a1) {Facebook}; 
    [.\node[] (a2) {$h_1$}; ]
    [.\node[] (a2) {$h_3$}; ]
    [.\node[] (a2) {$h_4$}; ]
    [.\node[] (a2) {$i_2$}; ]
    [.\node[] (a2) {$i_5$}; ]
    ]
    [.\node[] (a1) {LinkedIn}; 
    [.\node[] (a2) {$h_2$}; ]
    [.\node[] (a2) {$h_5$}; ]
    [.\node[] (a2) {$i_3$}; ]
    ]
    ]
    [.\node[] (a2) {WikiLeaks};
    [.\node[] (a2) {$h_6$}; ]
    [.\node[] (a2) {$i_1$}; ]
    [.\node[] (a2) {$i_4$}; ]
    [.\node[] (a2) {$i_6$}; ]
    ]
    ]
\end{tikzpicture}
\caption{Abstraction tree containing \reva{a subset of} tuple annotations in the database in Figure \ref{fig:db} as leaves, and inner nodes that are abstractions of the leaves}\label{fig:abstree}
\end{figure}
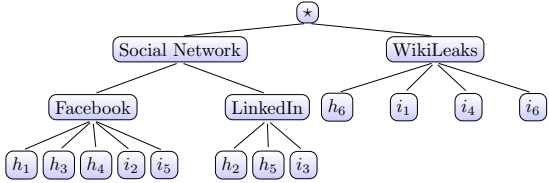

We study the complexity of the problem and show that it is intractable in general. Namely, deciding the existence of an abstraction with privacy at least $k$ and loss of information of at most $l$, is NP-hard.  Bearing this bound in mind, we provide novel heuristic algorithms for computing optimal abstractions in practically efficient ways. Our approach revolves around several key ideas. First, we optimize the order of traversal over the possible abstractions, by examining ``simpler" abstractions first. We further prioritize the computation of loss of information over privacy, as the former can be done significantly more efficiently. Additionally, privacy computation is performed in a greedy fashion, relying on the properties of the \ex. Finally, caching is used in order to avoid repetitive computations. Our heuristics and optimizations render our approach scalable even for large databases and complex queries, as observed in our experiments overviewed next.

{\bf Experimental evaluation.~} We have conducted an experimental study using the TPC-H \cite{tpch} \reva{and the IMDB \cite{imdb} datasets} in which we examined the scalability and usability of our solution for different settings. 
We study the performance in terms of varying data, tree sizes, query complexity, \ex\ size, and privacy thresholds. 
We show that thanks to our optimizations, our solution is efficient even in complex settings that involve queries with many joins, large volumes of data and a large space of abstractions.
\revb{We have also compared our solution with the provenance compression-based method presented in \cite{DeutchMR19}.} \common{Finally, we performed a user study, showing that abstracted \ex s provide the desired privacy while still being informative and useful.}



\section{Preliminaries}\label{sec:prelim}
We now define the background needed for our model. A summary of the notations used
throughout the paper is shown in Table \ref{notations}.

\begin{table}[!htb]
\centering \footnotesize

\caption{Notations}\label{notations}
\begin{center}
\begin{minipage}[t]{\linewidth}
\begin{center}
    \begin{tabular}{| c | l | c |}
    \hline $Q$ & \reva{Union} conjunctive query\\
    \hline $Ex$ & \ex \\
    \hline $T$ & Abstraction tree \\
    \hline $A_T$ & Abstraction function \\
    \hline $\absEx{Ex}$ & Abstracted \ex \\
    \hline $Var(Ex)$ & Set of variables in $Ex$ \\
    \hline $V_T$ & Set of nodes of tree $T$ \\
    \hline $L_T$ & Set of leaves of tree $T$ \\
    \hline $L_T(v)$ & \begin{tabular}{@{}c@{}}Set of leaves of the subtree\\of $T$ rooted in
    $v$\end{tabular} \\
    \hline $C(\absEx{Ex})$ & Concretization set of $\absEx{Ex}$ \\
    \hline
    \end{tabular}
    \end{center}
\end{minipage}
\end{center}
\end{table}


\subsection{Query Language and Provenance}\label{sec:cqs_prov}
We give a brief review of the concepts of Union of Conjunctive Queries and Provenance Polynomials.

{\bf Union of conjunctive queries.~}
We recall the concept of Unions of Conjunctive Queries. Fix a database schema $\mathcal{S}$ with relation names $\{R_1,...,R_n\}$ over a domain $\mathcal{C}$ of constants. Further fix a domain $\mathcal{V}$ of variables. A {\em CQ} $Q$ over $\mathcal{S}$ is an expression of the form $T(\vec{u}^{\,}) :- R_1(\vec{v_1}^{\,}), \ldots, R_l(\vec{v_l}^{\,})$ where $T$ is a relation name not in $\mathcal{S}$. For each $1 \leq i \leq n$, $\vec{v_i}^{\,}$ is a vector of the form $(x_1, \ldots, x_k)$ where $\forall 1 \leq j \leq k. \; x_j \in \mathcal{V} \cup \mathcal{C}$.
$T(\vec{u}^{\,})$ is the query head, denoted $head(Q)$, and $R_1(\vec{v_1}^{\,}), \ldots, R_l(\vec{v_l}^{\,})$ is the query body and is denoted $body(Q)$. The variables appearing in $\vec{u}^{\,}$ are called the \textit{head variables} of $Q$, and each of them must also appear in the body. A union of such queries is a UCQ. We use $UCQ$ to denote the class of all UCQs, omitting details of the schema when clear from the context.


Next, we define the notion of {\em derivations} for UCQs. A derivation
$\alpha$ for a query $Q \in UCQ$ with respect to a database instance
$D$ is a mapping of the relational atoms of $Q$ to tuples in $D$
that respects relation names and induces a mapping over arguments,
i.e., if a relational atom $R(x_1, ..., x_n)$ is mapped to a tuple
$R(a_1, ..., a_n)$ then we say that $x_i$ is mapped to $a_i$
(denoted $\alpha(x_i) = a_i$). We require that a variable $x_i$ will
not be mapped to multiple distinct values, and a constant $x_i$ will
be mapped to itself. For a CQ $q\in Q$, we define $\alpha(head(q))$ as the tuple
obtained from $head(q)$ by replacing each occurrence of a variable
$x_i$ by $\alpha(x_i)$.

\begin{example}\label{ex:cq}
Reconsider the CQ $Q_{real}$ depicted in Table \ref{fig:queries} and the output tuple (1) in the first row of Figure \ref{fig:ex-real}. It is derived using the tuples with annotations $p_1, h_1, i_1$ (Figure \ref{fig:db}) that are mapped to the first, second and third atom of $Q_{real}$ respectively. 
\end{example}



{\bf Provenance semirings.~} \reva{We focus on databases whose tuples are associated (``annotated") with elements of a set $X$, or polynomials (with positive coefficients) thereof \cite{GKT-pods07}. $X$ may be thought of as a set of identifiers each attached to a single input tuple. }

\reva{A {\em commutative monoid} (from \cite{DeutchG19}) is an algebraic structure $(M,+_{\sM},0_{\sM})$ where $+_{\sM}$ is an associative and
commutative binary operation and $0_{\sM}$ is an identity for $+_{\sM}$. A {\em commutative semiring} is
then a structure $(K,+_{\sK} ,\cdot_{\sK},0_{\sK},1_{\sK})$
where $(K,+_{\sK} ,0_{\sK})$ and $(K,\cdot_{\sK},1_{\sK})$ are commutative monoids, $\cdot_{\sK}$ is distributive over $+_{\sK} $,
and $a\cdot_{\sK}0_{\sK} = 0\cdot_{\sK} a = 0_{\sK}$.}
\reva{
A \rel\ is a mapping between tuples and elements of $K$.  A \db\ $D$ over a schema $\{R_1,...,R_n\}$ is then a collection of \rel s, over each~$R_i$. Unless stated otherwise, we will assume that in databases used as input to queries, all relations are abstractly-tagged: namely, each tuple is annotated by a distinct element of $X$ (intuitively, its identifier).
}

We then define \reva{UCQs} as mappings from \db s to \rel s. 
Intuitively, we define the annotation (provenance) of an output tuple as a combination of annotations of input tuples. 
The idea is that given a set of basic annotations $X$ (elements of which may be assigned to input tuples), the provenance of an output is represented by a sum of products, i.e., a polynomial. 
Coefficients serve in a sense as a ``shorthand" for multiple derivations using the same tuples, and exponents as a ``shorthand" for multiple uses of a tuple in a derivation.

\begin{definition} [adapted from \cite{GKT-pods07}]
\reva{
\label{def:basicprov} Let $D$ be a \db\ and let $Q \in UCQ$,
with $T_i$ being the relation name in $head(q_i)$ where $q_i \in Q$ is a CQ in $Q$. For every tuple $t \in T_i$, let $\alpha_{t}$ be the set of derivations of $q_i$ w.r.t. $D$ that yield $t$. $q_i(D)$ is defined to be a \rel\ $T_i$ s.t. for
every $t$, $T_i(t)=\sum_{head(q_i) = T_i}\sum_{\alpha \in \alpha_{t}}\prod_{t' \in
Im(\alpha)}Ann(t')$, where $Im(\alpha)$ is the image of $\alpha$, and $Ann(t')$ is the annotation of $t'$ according to its \rel. 
}
\end{definition}

\begin{example}
In Example \ref{ex:cq}, we showed that the output tuple (1) of $Q_{real}$ (Table \ref{fig:queries}) is derived from the tuples annotated by $p_1, h_1, i_1$. As a provenance polynomial, this corresponds to the monomial $p_1\cdot h_1\cdot i_1$. 
\end{example}

{\bf Provenance examples.~} We now define the notion of a \ex, which intuitively captures output examples and their explanations as provenance.




\begin{definition}[adapted from \cite{DeutchG19}]\label{provExample}
A \ex\ is a pair $(I,O)$ where $I$ is an abstractly-tagged \db\ called the {\em input} and $O$ is a \rel\ called the {\em output}.
\end{definition}

\reva{In words, $O$ denotes an output example and $I$ its provenance.}

\begin{example}\label{ex:ex}
A \ex\ is depicted in Figure \ref{fig:ex-real} where the left column shows two output examples, \reva{$O_1$ and $O_2$}, and the right column shows the provenance of each of them, \reva{$I_1$ and $I_2$}, respectively. 
\end{example}

For a \ex\ $Ex = (I,O)$, we denote by $Var(Ex)$ the set of tuple annotations in $I$ (see Table \ref{notations}).

\subsection{Provenance Abstraction Tree}
We define an {\em abstraction tree} over the provenance variables, drawing on \cite{DeutchMR19}. Intuitively, this defines groupings of different variables with a single value as a generalized representation of all of them.
The tree is structured so that the labels associated with tuples of the input examples are at the leaf level; inner nodes stand for abstractions of the labels associated with leaves of their sub-trees.

\begin{definition} \label{def:abs-tree}
An abstraction tree $T$ is a rooted labeled tree, where
each node has a unique label (we thus use ``node" and ``label" interchangeably). $V_T$ is used to denote the set of labels in $T$ and $L_T$ is the set of labels of the leaves in $T$. Given a \db\ $D$, we say that $T$ is {\em compatible} with $D$ if 
$(V_T\setminus L_T)\cap (\cup_{t\in D}Ann(t)) = \emptyset$.
\end{definition}



We say that an abstraction tree $T$ is {\em compatible} with a \ex\ $(I, O)$ if $T$ is compatible with $I$. 
\revb{If $T$ is not compatible with a \ex\ then it cannot be used as an abstraction tree for this particular \ex.}
\common{We will discuss ways of constructing abstraction trees at the end of Section \ref{sec:algo}.}

\begin{example}\label{ex:compatible}
Reconsider the \ex\ $Ex_{real}$ presented in Figure~\ref{fig:ex-real}. The abstraction tree $T$ shown in Figure~\ref{fig:abstree} is compatible with $Ex_{real}$ since none of the inner nodes of $T$ (e.g., $Facebook$) are labeled by the variables of $Ex_{real}$. 
\end{example}

\section{Model}\label{sec:model}
We define our novel model for the problem of provenance privacy.

\subsection{Abstractions and Concretizations}

Let $T$ be an abstraction tree. For $v,v' \in V_T$, we say that $v \leq_T v'$ if $v$ is a descendant of $v'$ in $T$ (or $v'=v$). 

\begin{definition} [Abstraction Function]
\label{def:absfunc}
Given an abstraction tree $T$ that is compatible with a \ex\ $Ex$ and an ordering over the variables of $Ex$ where each variable occurrence is assigned an index $i\in\bbN$, an {\em abstraction function} over $T$ is a function $A_T:Var(Ex) \times \bbN \rightarrow (V_T \cup Var(Ex))$ that maps each occurrence of a variable $v\in Var(Ex)$ at index $i$ such that $v\in L_T$ to $v'\in V_T$, where $v\leq_T v'$. If $v\notin L_T$, \revb{$A_T(v,i)=v$}.
\end{definition}

Note that $A_T$ may map different occurrences of the same variable $v$ to different nodes in $T$, namely, it is possible to have $A_T(v,i) \neq A_T(v,j)$, where $A_T(v,i)$ ($A_T(v,j)$) is the mapping of the $i$-th (resp. $j$) occurrence of $v$. To simplify notations, in the rest of the paper we assume each variable appears once, and omit the index from $A_T$.
Overloading notation, we use $A_T(Ex)$ to denote the \ex\ $\absEx{Ex}$ obtained by replacing each $v\in Var(Ex)$ by $A_T(v)$ for all $v\in L_T$. 

\reva{We next demonstrate the notion of abstraction function. In practice, these functions are generated automatically by the algorithm given in Section \ref{sec:algo}.}
In the rest of the paper, we will use the term {\em abstraction} interchangeably for the concepts of an abstraction function and its output, an abstracted \ex.

\begin{example}\label{ex:absfunc}
Reconsider the \ex\ $Ex_{real}$ given in Figure~\ref{fig:ex-real} and the abstraction function $A_T^1$ depicted in Figure \ref{fig:absfuncs}. Using $A_T^1$ on $Ex_{real}$ will create the abstracted \ex\ $\absEx{Ex}_{abs1}$ shown in Figure~\ref{fig:absexamples}. Formally, $A_T^1(Ex_{real}) = \absEx{Ex}_{abs1}$.
\end{example}


\begin{figure}[!htb]
\centering\scriptsize
\begin{minipage}{\linewidth}
	\centering
    \begin{subfigure}{\linewidth}
        \centering\scriptsize
		\[ A_T^1(v) = \begin{cases*}
            Facebook, & if $v = h_1, h_4$ \\
            LinkedIn, & if $v = h_2, h_5$ \\
            v,        & otherwise
        \end{cases*} \]
	\end{subfigure}
	\newline\newline
	\newline
    \begin{subfigure}{\linewidth}
        \centering\scriptsize
        \[ A_T^2(v) = \begin{cases*}
            WikiLeaks, & if $v = i_1, i_4$ \\
            Facebook,  & if $v = i_2, i_5$ \\
            v,         & otherwise
        \end{cases*} \]
	\end{subfigure}
	\newline\newline
	\newline
	\begin{subfigure}{\linewidth}
        \centering\scriptsize
        \[ A_T^3(v) = \begin{cases*}
            WikiLeaks, & if $v = i_1$ \\
            v,         & otherwise
        \end{cases*} \]
	\end{subfigure}
	\newline\newline\newline
	\caption{Abstraction Functions} \label{fig:absfuncs}
\end{minipage}

\begin{minipage}{\linewidth}
	\centering\scriptsize
    \begin{subfigure}{\linewidth}
        \centering\scriptsize
        \captionsetup{justification=raggedright,singlelinecheck=false}
        \caption*{\scriptsize{$\absEx{Ex}_{abs1}=A_T^1(Ex_{real})=A_T^1(Ex_{false1})=$}}
		\begin{tabular}{| c | c |} 
            \hline Output & Provenance \\ [0.5ex] 
            \hline 1 & $p_1 \cdot Facebook \cdot i_1$ \\
            \hline 2 & $p_2 \cdot LinkedIn \cdot i_2$ \\
            \hline
        \end{tabular} 
	\end{subfigure}
	\newline\newline
    \begin{subfigure}{\linewidth}
        \centering\scriptsize
        \captionsetup{justification=raggedright,singlelinecheck=false}
        \caption*{\scriptsize{$\absEx{Ex}_{abs2}=A_T^2(Ex_{real})=A_T^2(Ex_{false2})=$}} 
	    \begin{tabular}{| c | c |} 
            \hline Output & Provenance \\ [0.5ex] 
            \hline 1 & $p_1 \cdot h_1 \cdot WikiLeaks$ \\
            \hline 2 & $p_2 \cdot h_2 \cdot Facebook$ \\
            \hline
        \end{tabular}
	\end{subfigure}
	\newline\newline
	\begin{subfigure}{\linewidth}
        \scriptsize
        \captionsetup{justification=raggedright,singlelinecheck=false}
        \caption*{\scriptsize{$\absEx{Ex}_{abs3}=A_T^3(Ex_{real})=$}}
        \centering
		\begin{tabular}{| c | c |} 
            \hline Output & Provenance \\ [0.5ex] 
            \hline 1 & $p_1 \cdot h_1 \cdot WikiLeaks$ \\
            \hline 2 & $p_2 \cdot h_2 \cdot i_2$ \\
            \hline
        \end{tabular}
	\end{subfigure}
	\caption{Abstracted \ex s} \label{fig:absexamples}
\end{minipage}
\end{figure}

A concretization is then the `reverse' operation of abstraction. 

\begin{definition} [Concretization]
Given an abstracted \ex\ $\absEx{Ex}$ and an abstraction tree $T$, a \ex\ $Ex$ is a {\em concretization} of  $\absEx{Ex}$ if there exists an abstraction function $A_T$ such that $A_T(Ex)=\absEx{Ex}$. The concretization set of $\absEx{Ex}$ is
$C(\absEx{Ex})=\{Ex\mid\exists A_T.~ A_T(Ex)=\absEx{Ex}\}$
\end{definition}

Since sub-trees in the abstraction tree may have multiple leaves, an abstracted \ex\ can have more than one concretization. Therefore, we have defined the {\em concretization set} containing all options for concretizations.


\begin{example} \label{ex:concretization-set}
Consider again the abstracted \ex\ $\absEx{Ex}_{abs1}$  presented in Figure \ref{fig:absexamples}, the \ex\ $Ex_{real}$ shown in Figure \ref{fig:ex-real} and the abstraction function $A_T^1$ given in Figure \ref{fig:absfuncs}. 
From Example~\ref{ex:absfunc}, we have $Ex_{real} \in C(\absEx{Ex}_{abs1})$ since $A_T^1(Ex_{real})=\absEx{Ex}_{abs1}$.
Now consider the \ex\ $Ex_{false1}$ shown in Figure \ref{fig:ex-false1}. It also holds that  $A_T^1(Ex_{false1})=\absEx{Ex}_{abs1}$, and thus $Ex_{false1} \in C(\absEx{Ex}_{abs1})$, i.e., $Ex_{false1}$ is also in the concretization set of $\absEx{Ex}_{abs1}$. 
$C(\absEx{Ex}_{abs1})$ also contains other \ex s beside  $Ex_{real}$ and $Ex_{false1}$.
\end{example}

The following are simple observations regarding the size of a concretization set that will be useful in the sequel. Note that $L_T$ is the set of leaves of the abstraction tree $T$ and $L_T(v)$ is the set of leaves of the subtree of $T$ rooted in $v$.

\begin{proposition} \label{prop:concretization-set-size}
Given an abstraction tree $T$ that is compatible with a \ex\ $Ex$ and an abstraction function $A_T$, it holds that:
\begin{enumerate}
    \item $|C(A_T(Ex))|=\displaystyle\prod_{v \in Var(Ex)}~|~L_T(A_T(v))|$
    \item $1 \leq |C(A_T(Ex))| \leq |L_T|^n$, where $n=|\{v \in Var(Ex)~|~v \neq A_T(v)\}|$, and these bounds are tight.
\end{enumerate}
\end{proposition}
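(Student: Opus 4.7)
My plan is to prove part~1 by exhibiting a bijection between $C(A_T(Ex))$ and the Cartesian product $\prod_{v\in Var(Ex)} L_T(A_T(v))$, and then to read off part~2 by bounding each factor in this product and exhibiting witnesses for tightness.

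For part~1, I would first observe that $\absEx{Ex}=A_T(Ex)$ has the same ``shape'' as $Ex$: every occurrence of a variable $v\in Var(Ex)$ is replaced by $A_T(v)$ at the same position and nothing else is changed. By the single-occurrence convention stated just after Def.~\ref{def:absfunc}, each $v\in Var(Ex)$ corresponds to a unique position in $\absEx{Ex}$. A concretization $Ex'$ is a \ex\ admitting some abstraction function $A_T'$ with $A_T'(Ex')=\absEx{Ex}$, so the variable $v'$ occupying the position of $v$ in $Ex'$ must, by Def.~\ref{def:absfunc}, lie in $L_T$ and satisfy $v'\leq_T A_T(v)$, i.e., $v'\in L_T(A_T(v))$. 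Conversely, any such $v'$ is realizable as the concretization of that position by declaring $A_T'(v'):=A_T(v)$. Since the positions of distinct variables can be filled independently, the map that records the leaf chosen at each position is a bijection between $C(\absEx{Ex})$ and $\prod_{v\in Var(Ex)} L_T(A_T(v))$, which proves the product formula.

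Part~2 then follows by bounding each factor. Each factor satisfies $1\le |L_T(A_T(v))|\le |L_T|$, and whenever $A_T(v)=v$ (the only way a variable is unabstracted) the factor equals $1$, since a leaf's subtree contains only itself. Hence only the $n$ abstracted variables can contribute factors larger than $1$, yielding $|C(A_T(Ex))|\le |L_T|^n$. Tightness of the lower bound is witnessed by the identity abstraction $A_T(v)=v$ for all $v$, which makes the product equal $1$. Tightness of the upper bound is witnessed by any abstraction that sends all $n$ abstracted variables to the root of $T$, whose subtree covers all $|L_T|$ leaves.

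The hard part will be verifying both directions of the bijection in part~1: one must check that every tuple of leaf choices yields a genuine concretization (by constructing a valid $A_T'$ on the chosen leaves) and that two distinct tuples produce two distinct \ex s. Once the single-occurrence assumption is invoked this verification is routine, and the rest of the proposition follows mechanically from the product formula.
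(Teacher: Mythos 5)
Your proof is correct, and the underlying idea is the same as the paper's: once variables are assumed to occur once each, the leaf occupying the position of $v$ in any concretization can be chosen independently from $L_T(A_T(v))$, and unabstracted variables contribute a factor of $1$. The only difference is packaging: the paper formalizes the independence of these choices by induction on the number $n$ of abstracted variables, peeling off one variable at a time and multiplying by $|L_T(A_T(v_{n+1}))|$, whereas you state the product structure directly as a bijection with $\prod_{v} L_T(A_T(v))$. Your version is arguably cleaner and makes the ``realizability'' direction (every tuple of leaf choices admits a valid $A_T'$) explicit, which the paper leaves implicit; the paper's induction buys nothing extra here. Your treatment of part~2 and of tightness (identity function for the lower bound, mapping every abstracted variable to the root for the upper bound) coincides with the paper's. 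One caveat applies equally to both arguments: injectivity of the correspondence is only clear because distinct positions are assumed to carry distinct variables; if two positions abstracted to the same node were concretized to the same leaf, commutativity of the provenance monomials could in principle identify distinct tuples of choices, a corner case neither you nor the paper addresses.
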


\begin{proof}
\begin{enumerate}[align=left, leftmargin=0pt, labelindent=\parindent, listparindent=\parindent, labelwidth=0pt, itemindent=!]
\item In induction on the number of abstracted values $n=|\{v \in Var(A_T(Ex))|v \ne A_T(v)\}|$. If $n=0$ it holds that $\forall v \in Var(A_T(Ex)), v=A_T(v)$. Thus, $\forall v$ it holds that
\begin{align*}
    \displaystyle\prod_{v \in Var(Ex)}|L_T(A_T(v))|&=\displaystyle\prod_{v \in Var(Ex)}|L_T(v)|\\
    &=\displaystyle\prod_{v \in Var(Ex)}|\{v\}| = 1
\end{align*}
It is also clear that $|C(A_T(Ex))|=1$ since the abstraction function is the identity function, so $Ex$ itself is the only concretization that holds $Id(Ex)=Ex$ and the base case is true.

About the inductive step, let's assume the proposition holds for $n$ and we will prove it for $n+1$. Let's denote $Var(Ex)=\{v_1,\dotsc,v_m\}, n<m$. Now, w.l.o.g, assume that if $i \in \{1,\dotsc,n\}$ then $v_i \ne A_T(v_i)$ and if $i \in \{n+1,\dotsc,m\}$ then $v_i = A_T(v_i)$. Now, by the inductive assumption it holds that:
\begin{align*}
    |C(A_T(Ex))|&=\displaystyle\prod_{v \in Var(Ex)}|L_T(A_T(v))|\\
    &=\displaystyle\prod_{i=1}^n|L_T(A_T(v_i))|
\end{align*}
The last equality holds since if $v=A_T(v)$ then
\begin{equation*}
    |L_T(A_T(v))|=|L_T(v)|=1
\end{equation*}
so it is not effect the product.

Now, for the $n+1$ case, we changed $A_T$ s.t.
\begin{equation*}
    v_{n+1} \in Var(A_T(Ex)), v_{n+1} \ne A_T(v_{n+1})
\end{equation*}
The concretization set contains only \ex s $Ex'$ that holds $\exists A_T, A_T(Ex')=Ex$. By definition, an abstraction function $A_T:L_T \rightarrow V_T$ is a function that transform each leaf $v$ to a single ancestor $v'$ in the tree. From that we know that if $Ex' \in C(Ex)$ it holds that
\begin{equation*}
    \forall v \in V(Ex'), v \ne A_T(v) \Rightarrow v \in L_T(v)
\end{equation*}
We also know that $v_{n+1} \ne A_T(v_{n+1})$ so it holds that $A_T(v_{n+1})$ can be any $l \in L_T(A_T(v_{n+1}))$, and since there are $|L_T(A_T(v_{n+1}))|$ options for that value, we multiple all the previous concretization with every new option. Thus,
\begin{align*}
    |C(A_T(Ex))|&=|L_T(A_T(v_{n+1}))| \cdot \displaystyle\prod_{i=1}^n|L_T(A_T(v_i))|\\
    &=\displaystyle\prod_{i=1}^{n+1}|L_T(A_T(v_i))|
\end{align*}
and we are done.

\item It is clear that $1 \leq |C(A_T(Ex))|$ since if the abstraction function is the identity function it is always true that $Id(Ex)=Ex$, so $Ex$ itself is a concretization.

Now, since $\forall v, L_T(A_T(v)) \leq L_T$, from the previous part we get that:
\begin{equation*}
    |C(A_T(Ex))|= \displaystyle\prod_{i=1}^n|L_T(A_T(v_i))| \leq \displaystyle\prod_{i=1}^n|L_T|=|L_T|^n
\end{equation*}
and we are done.

\item For the first equality, choosing $A_T'$ to be the identity function, i.e., $A_T'(v)=v$, the only concretization is $Ex$ itself, so $|C(A_T'(Ex))|=1$.

For the second equality, we denote by $r$ the abstraction tree $T$'s root. Consider the following abstraction function: $A_T''(v)=r, \forall v \in Var(Ex)$. With this abstraction tree, $|L_T(A_T''(v_i))|=|L_T|, \forall v \in Var(Ex)$, so it holds that:
\begin{equation*}
    |C(A_T''(Ex))|= \displaystyle\prod_{i=1}^n|L_T(A_T''(v_i))|= \displaystyle\prod_{i=1}^n|L_T|=|L_T|^n
\end{equation*}
\end{enumerate}
\noindent and we are done.
\end{proof}


\subsection{Loss of Information}\label{subsec:loss}

Each abstraction entails a loss of information. We 
measure the loss of information of an abstracted \ex\ $\absEx{Ex}$ via the notion of {\em Entropy}. Entropy is the average level of ``information'' or ``uncertainty'' inherent in the possible outcomes of a random variable \cite{shannon1948mathematical}. Given a random variable $X$, with possible outcomes $x_i$, each with probability $P_X(x_i)$, the entropy $H(X)$ of $X$ is as follows: $H(X)=-\sum_i P_X(x_i)\ln{P_X(x_i)}$. 
The entropy quantifies how `informative' or `surprising' the random variable is, averaged over all of its possible outcomes. Next, we define the entropy induced by abstraction, as follows:



\begin{definition}
Given an abstraction tree $T$ that is compatible with a \ex\ $Ex$, an abstraction function $A_T$ and a probability space on $X=C(A_T(Ex))$ (the concretization set of $A_T(Ex)$) we define the loss of information by
$LOI(A_T(Ex)) = -\sum_{i=1}^n P_{X}(x_i)\ln{P_{X}(x_i)}$
where $X=C(A_T(Ex))=\{x_1,\dotsc,x_n\}$ and $P_X(x_i)$ is the probability of the concretization $x_i$.
\end{definition}


The probabilities may be determined using statistical properties of the database or external information. Note that for a finite probability space $X$ with a discrete uniform distribution over $n$ states, the entropy is $H(X)=\ln(n)$. Since $C(A_T(Ex))$ is a finite set (Proposition \ref{prop:concretization-set-size}), if the probabilities of all concretizations in $C(A_T(Ex))$ are equal then $LOI(A_T(Ex)) = \ln(|C(A_T(Ex))|)$.

\begin{example}
Reconsider the abstracted \ex\ $Ex_{real}$ presented in Figure \ref{fig:ex-real}, the abstracted tree $T$ shown in Figure \ref{fig:abstree} and the abstraction function $A_T^3$ depicted in Figure \ref{fig:absfuncs}.
The output of $A_T^3(Ex_{real})$ is the abstracted \ex\ $\absEx{Ex}_{abs3}$ shown in Figure~\ref{fig:absexamples}.
The concretization set of $\absEx{Ex}_{abs3}$ is given in Figure \ref{fig:cons-set-ex-abs3}. Assuming the probabilities of the concretizations are $P_X(c_1)=0.1$, $P_X(c_2)=0.2$, $P_X(c_3)=0.3$ and $P_X(c_4)=0.4$. the loss of information of $\absEx{Ex}_{abs3}$ is 
$-\sum_{i=1}^4 P_{X}(c_i)\ln{P_{X}(c_i)} =-(0.1\cdot\ln0.1+\ldots+0.4\cdot\ln0.4) \approx1.279$
\end{example}




\subsection{Privacy}
We next define our privacy measure.

{\bf Consistent and CIM queries.~}
Next, we define the concepts of consistent and connected inclusion-minimal queries with respect to a \ex. Our definitions are inspired by \cite{DeutchG19} and extend them. As a preliminary step, we define subsumption of \rel s.


\begin{definition}[from \cite{DeutchG19}]\label{def:order}
\reva{Let $(K,+_{K},\cdot_{K},0,1)$ be a semiring and define $a \leq_{K} b$ iff  $\exists c.~ a+_{K} c = b$. If $\leq_{K}$ is a (partial) order relation then we say that $K$ is naturally ordered.
Given two $K$-relations $R_1,R_2$ we say that $R_1 \subseteq_{K} R_2$ iff 
$\forall t. R_1(t) \leq_{K} R_2(t)$.}
\end{definition}

We now define a consistent query w.r.t. an abstracted example. Intuitively, a query $Q$ is consistent w.r.t. $\absEx{Ex}$ if there exists a concretization of $\absEx{Ex}$ for which $Q$ generates the output tuples when given the provenance, and the provenance generated by $Q$ matches the one specified in the concretization. 

\begin{definition}\label{def:consistent}[consistent query]
Given an abstracted \ex\ $\absEx{Ex}$ and a CQ $Q$ we say that $Q$ is consistent with respect to the example $\absEx{Ex}$ if there exists $(I,O) \in C(\absEx{Ex})$ such that 
\reva{$O\subseteq_{K} Q(I)$}. 
\end{definition}

To define privacy, we use the concept of {\em connected inclusion-minimal queries} (CIM queries). 
Intuitively, we define the privacy criterion by the number of the most `focused' queries. We draw on previous works in the field of query-by-example \cite{KalashnikovLS18} that looks for connected queries and on \cite{DeutchG19} that looks for minimality in terms of inclusion. 
Recall that the join graph for a CQ is defined by the set of relations in its body $\{R_1, \ldots, R_m\}$ with an edge $(R_i, R_j)$ iff $R_i$ and $R_j$ share at least one variable. We say that a query is connected if its join graph is connected. 

\begin{definition}[CIM query]\label{def:cim_global}
A consistent query $Q$ with respect to a given abstracted \ex\ $\absEx{Ex}$ is a CIM query if it is connected and for every query $Q'$ such that $Q' \subsetneq_{K} Q$,
(i.e., for every \db\ $D$ it holds that $Q'(D) \subseteq_{K} Q(D)$, but not vice-versa), 
$Q'$ is not consistent with respect to $\absEx{Ex}$. Namely, $\forall Ex \in C(\absEx{Ex})$, $Q'$ is not a consistent query of $Ex$.
\end{definition}




\begin{example} \label{ex:cim-queries-all-conc}

Consider the abstracted \ex\ $\absEx{Ex}_{abs3}$ in Figure \ref{fig:absexamples} and its concretization set given in Figure \ref{fig:cons-set-ex-abs3}. There is only one CIM query w.r.t. $\absEx{Ex}_{abs3}$ which is $Q_{real}$ (shown in Table \ref{fig:queries}) since it is consistent w.r.t. the concretization $c_2$, connected and minimal w.r.t. all other consistent connected queries. 
Now consider the query $Q_{general}$ (shown in Table \ref{fig:queries}). It is consistent w.r.t. the concretization $c_3$ and connected. However, $Q_{general}$ is not CIM since $Q_{real} \subseteq Q_{general}$ (both queries have the same structure but $Q_{real}$ contains an extra constant).

\end{example}

\reva{Definition \ref{def:cim_global} may consider trivial queries as CIM if we allow for union. For example, in $\absEx{Ex}_{abs3}$ in Figure \ref{fig:absexamples}, the concretization $c_1$ in Figure \ref{fig:cons-set-ex-abs3} leads to the trivial CIM query $Q = q_1 \cup q_2$ where $q_1(1) :- p_1, h_1, h_6$ and $q_2(1) :- p_2, h_2, i_2$. Naturally, these types of UCQs do not generalize the \ex\ and therefore are not likely queries. In Section \ref{sec:algo}, we discuss a version of our solution that disqualifies such trivial queries.}

\begin{figure}[!htb]
    \centering \scriptsize
    \begin{multline*}
    C(\absEx{Ex}_{abs3})= \\ \left\{\begin{array}{lr}
        \substack{
        c_1 = \scriptsize{\begin{tabular}{| c | c |} 
            \hline Output & Provenance \\ [0.5ex]
            \hline 1 & $p_1 \cdot h_1 \cdot h_6$\\ 
            \hline 2 & $p_2 \cdot h_2 \cdot i_2$\\ 
            \hline
        \end{tabular}}\\\\\\
        c_2 = \scriptsize{\begin{tabular}{| c | c |} 
            \hline Output & Provenance \\ [0.5ex]
            \hline 1 & $p_1 \cdot h_1 \cdot i_1$\\ 
            \hline 2 & $p_2 \cdot h_2 \cdot i_2$\\ 
            \hline
        \end{tabular}}} &
        \substack{
        c_3 = \scriptsize{\begin{tabular}{| c | c |} 
            \hline Output & Provenance \\ [0.5ex]
            \hline 1 & $p_1 \cdot h_1 \cdot i_4$\\ 
            \hline 2 & $p_2 \cdot h_2 \cdot i_2$\\ 
            \hline
        \end{tabular}}\\\\\\
        c_4 = \scriptsize{\begin{tabular}{| c | c |} 
            \hline Output & Provenance \\ [0.5ex]
            \hline 1 & $p_1 \cdot h_1 \cdot i_6$\\ 
            \hline 2 & $p_2 \cdot h_2 \cdot i_2$\\ 
            \hline
        \end{tabular}}}
        \end{array}\right\}
    \end{multline*}
    \caption{\reva{Concretization Set of $\absEx{Ex}_{abs3}$ (from Figure \ref{fig:absexamples})}}\label{fig:cons-set-ex-abs3}
\end{figure}

{\bf Privacy of an abstracted \ex.~}
We are now ready to define the privacy of a \ex. Our definition is similar in spirit to the $k$-anonymity criterion in data privacy \cite{KAnonymity2002}. 

\begin{definition} [Privacy] \label{def:privacy}
The {\em privacy} of an abstracted \ex\ $\absEx{Ex}$ is the number of unique CIM queries w.r.t. $\absEx{Ex}$.
\end{definition}

\revb{As with $k$-anonymity, a higher number of unique CIM queries w.r.t. an abstracted \ex\ indicates that this abstracted \ex\ is more private.}
\reva{Even an abstracted \ex\ can reveal some information about the query structure. In particular, the tables participating in the query and possibly also the join structure can be inferred from the combination of the schema and the \ex.}


\begin{example} \label{ex:privacy}
Reconsider the abstracted \ex\ $\absEx{Ex}_{abs1}$ presented in Figure \ref{fig:absexamples}.  We now detail the CIM queries w.r.t. $\absEx{Ex}_{abs1}$. First, we note that the consistent queries w.r.t. $\absEx{Ex}_{abs1}$ are depicted in Table~\ref{fig:cons-queries-ex-abs1}. We choose only the queries that are connected (the queries marked by `con'). From these, we choose only the queries that are inclusion-minimal w.r.t. $\absEx{Ex}_{abs1}$. Those are the queries marked with `min' as well. Therefore, the CIM queries are annotated with `con, min'. There are only $2$ queries that fulfill these terms, $Q_{real}$ and $Q_{false1}$ (shown in Table~\ref{fig:queries}). Thus, the privacy of $\absEx{Ex}_{abs1}$ is $2$.
\end{example}

\begin{table}
    \centering \footnotesize
    \caption{Some of the consistent queries w.r.t. $\absEx{Ex}_{abs1}$ from Figure \ref{fig:absexamples}. There is a total of $14$ consistent queries. From those, $3$ are connected (labeled `con'), and from those $2$ are CIM (labeled `con, min'). This shows that the privacy of $\absEx{Ex}_{abs1}$ is $2$}\label{fig:cons-queries-ex-abs1}
    \begin{tabularx}{\linewidth}{| c | X | c | c | c | c |}
        \hline {\bf Class} & {\bf Query} \\
        \hline con, min & Q(a) :- Person(a,b,c), Hobbies(a,`Dance',d), Interests(a,`Music',e) \\
        \hline & Q(a) :- Person(a,p,q), Hobbies(r,s,t), Interests(u,v,w) \\
        \hline & Q(a) :- Person(a,b,c), Hobbies(d,`Dance',e), Interests(a,`Music',f) \\
        \hline con & Q(a) :- Person(a,b,c), Hobbies(a,d,e), Interests(a,`Music',f) \\
        \hline con, min & Q(a) :- Person(a,b,c), Hobbies(a,`Trips',d), Interests(a,`Music',e) \\
        \hline & Q(a) :- Person(a,b,c), Interests(d,`Music',e), Interests(a,`Music',f) \\
        \hline
    \end{tabularx}
\end{table}

Note that in Example \ref{ex:privacy}, all disconnected queries are missing the logic expressed by the connected queries. 

\subsection{Problem Definition}
We are now ready to define the problem of provenance abstraction. 
In short, given a \ex\ and a privacy threshold, we want to find an abstraction that satisfies this threshold but also minimizes the loss of information. 

\begin{definition}\label{def:problem} [Problem Definition]
Given an abstraction tree $T$ that is compatible with a \ex\ $Ex$ and $k \in \bbN$ a privacy threshold, our goal is to find an abstraction function $A_T$ where $A_T(Ex)$ has privacy $\geq k$, and $A_T$ minimizes $A_T(Ex)$'s loss of information out of all the abstraction functions that guarantee privacy $\geq k$. We call this abstraction an {\em optimal abstraction}.
\end{definition}

\begin{example}\label{ex:problem}
Reconsider the database depicted in Figure \ref{fig:db}, the query $Q_{real}$ shown in Table \ref{fig:queries}, its output $Ex_{real}$ given in Figure \ref{fig:ex-real} and the abstraction tree $T$ presented in Figure \ref{fig:abstree}. Assume that the privacy threshold is $2$ (i.e., we want our privacy to be at least $2$) and the loss of information is entropy with discrete uniform distribution.
We can use the abstraction function $A_T^2$ (detailed in Figure \ref{fig:absfuncs}) so that $A_T^2(Ex_{real})$ yields $\absEx{Ex}_{abs2}$ (depicted in Figure \ref{fig:absexamples}). Since the queries $Q_{real}$ and $Q_{false2}$ (shown in Table \ref{fig:queries}) are CIM w.r.t. $\absEx{Ex}_{abs2}$, its privacy is 2. 
In addition, $\ln|C(\absEx{Ex}_{abs2})|=\ln(5 \cdot 4)=\ln20\approx2.996$, thus the loss of information incurred by $A_T^2(Ex_{real})$ is $2.996$.
On the other hand, we can use the abstraction function $A_T^1$ (detailed in Figure \ref{fig:absfuncs}) so that $A_T^1(Ex_{real})$ yields $\absEx{Ex}_{abs1}$ (depicted in Figure \ref{fig:absexamples}). In Example \ref{ex:privacy} we have seen that the privacy of $\absEx{Ex}_{abs1}$ is $2$. In addition, $\ln|C(\absEx{Ex}_{abs1})|=\ln(5 \cdot 3)=\ln15\approx2.708$, thus the loss of information incurred by $A_T^1(Ex_{real})$ is $2.708$.
Since the loss of information of $A_T^1$ is smaller than all possible abstraction functions that guarantee privacy $\geq 2$ (in particular, $A_T^2$), it is an optimal abstraction.
\end{example}

\reva{
{\bf Aggregate queries.~} A model for provenance for aggregation queries was defined in~\cite{AmsterdamerDT11}. In a nutshell, the aggregation result is represented as a semimodule, that couples, using a tensor product, values from the aggregate domain and the tuple annotations. For example, consider an aggregate query with a similar structure to that of $Q_{real}$ (shown in Table \ref{fig:queries}), that performs a MAX aggregation on the age attribute, i.e., instead of the people ids it returns the maximal age of all people that like dancing and music. In this case the resulting aggregate value would be $(p_1\cdot h_1 \cdot i_1)\otimes 27 +_{MAX} (p_2\cdot h_2 \cdot i_2)\otimes 31$.
Our model can support queries with aggregation over the head variables, where abstraction functions operate on the tuple's annotation part in the semimodule. For instance, the result of applying $A_T^1$ (shown in Figure \ref{fig:absfuncs}) on the aforementioned aggregate result is $(p_1\cdot Facebook \cdot i_1)\otimes 27 +_{MAX} (p_2\cdot LinkedIn \cdot i_2)\otimes 31$.
}

\section{Hardness and Solution} \label{sec:algo}
We first note that the optimal abstraction problem is intractable. To this end we define the decision problem version of the optimal abstraction: given an abstraction tree compatible with a \ex\ and integers $k, l$, determine whether there is an abstraction function that gives a privacy of at least $k$ with at most $l$ loss of information. 
This decision problem is NP-hard in the size of the intersection of the provenance variables with the leaves of the abstraction tree. 

\begin{proposition}\label{prop:hardness}
The decision problem version of the optimal abstraction is NP-hard. 
\end{proposition}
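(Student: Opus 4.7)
The plan is to establish NP-hardness by reducing from \textsc{Set Cover}: given a universe $U = \{u_1,\ldots,u_m\}$, a family $\mathcal{S} = \{S_1,\ldots,S_n\}$ of subsets of $U$, and an integer $p$, decide whether some sub-collection of at most $p$ sets covers $U$. The structural analogy is that both problems require a family of largely independent choices (which variables to abstract vs.\ which sets to include) whose aggregate effect must satisfy a coverage-style target (privacy threshold vs.\ covering $U$) under a budget (loss of information vs.\ cover size). Two facts drive the reduction: under a uniform distribution, loss of information is additive across variables, $\ln|C(A_T(Ex))| = \sum_{v}\ln|L_T(A_T(v))|$ by Proposition~\ref{prop:concretization-set-size}; and the privacy of an abstracted \ex\ is monotone non-decreasing as more variables are abstracted.

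From a \textsc{Set Cover} instance I would build an input database with $n$ tuple identifiers $x_1,\ldots,x_n$, one per set $S_j$, and an abstraction tree $T$ in which each $x_j$ is a leaf whose only non-trivial ancestor $P_j$ has exactly $e$ leaves in its subtree; abstracting $x_j$ to $P_j$ then costs exactly $1$ in loss of information, while the ``do nothing'' choice costs $0$. The \ex\ would be assembled so its provenance monomials jointly mention all $x_j$. Then, for each element $u_i \in U$, I would add a small gadget relation so that a unique connected inclusion-minimal query $Q_i$ becomes consistent with $A_T(Ex)$ iff the concretization set of $A_T(Ex)$ contains a tuple witnessing some $x_j$ with $u_i \in S_j$---equivalently, iff at least one $x_j$ with $u_i \in S_j$ has been abstracted to $P_j$. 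Taking the privacy threshold $k = m$ and the loss bound $l = p$ would then yield: a cover of size $\le p$ corresponds to an abstraction with privacy $\ge m$ and loss $\le p$, and conversely.

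The main obstacle will be the second step: guaranteeing a clean bijection between the CIM queries consistent with $A_T(Ex)$ and the elements of $U$ that get covered, with no spurious CIM queries (from disconnected projections, or from strictly smaller consistent connected queries) and no intended $Q_i$ being subsumed by some smaller consistent connected query. This calls for gadgets in which each $Q_i$ is manifestly inclusion-minimal---e.g., by using a dedicated relation symbol that appears only in $Q_i$---and manifestly connected, e.g., by sharing a single head variable across all of its atoms. Arranging the gadgets so that independent abstraction decisions for different $x_j$ interact only through the intended logical OR over containing sets, without cross-element interference, is where the bulk of the combinatorial work lies.

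Finally, I would verify that the construction runs in time polynomial in $m + n$, that $|Var(Ex) \cap L_T| = n$ so hardness is expressed in the claimed parameter, and both directions of the equivalence: from an abstraction meeting the two thresholds, read off which $x_j$ were abstracted to recover a cover of size at most $p$; and from a cover of size at most $p$, abstract exactly the corresponding $x_j$ to obtain an abstraction of loss at most $p$ and privacy at least $m$.
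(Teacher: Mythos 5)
Your plan and the paper's proof are genuinely different reductions, and yours has a gap at exactly the point you flag as "where the bulk of the combinatorial work lies." The paper reduces from Vertex Cover and sets the privacy threshold to $1$: the entire covering constraint is encoded into the \emph{connectivity} requirement of a single CIM query. Its example has one row whose monomial contains all edge-tuples $E_1\cdots E_m$ plus $k$ abstractable $VC$-tuples; since each $E(x_j)$ atom can only join to a $VC(\cdot,x_j,\cdot)$ atom, a connected consistent query exists for some concretization iff the $k$ concretized $VC$-tuples touch every edge, i.e., iff $G$ has a vertex cover of size $k$. No counting of CIM queries is ever needed. Your reduction instead requires the number of CIM queries to equal the number of covered elements of $U$, with threshold $k=m$, and this is a substantially stronger property to engineer than merely "at least one CIM query exists."

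The concrete obstacles are these. First, consistency (Definition~\ref{def:consistent}, $O\subseteq_{K} Q(I)$ over $\NX$) forces \emph{every} atom of each intended $Q_i$ to map into \emph{every} provenance monomial of the witnessing concretization; so a "dedicated relation symbol appearing only in $Q_i$" must have a tuple present in every row's monomial, for all $m$ elements simultaneously, which means those witness tuples are already visible in the unabstracted example and risk making queries consistent before any abstraction occurs. Second, inclusion-minimality is judged against all connected queries consistent with \emph{any} concretization of $A_T(Ex)$, so as you abstract more $x_j$'s the concretization set grows and a previously minimal $Q_i$ can be subsumed by a newly consistent smaller query, breaking monotonicity of your count; conversely, the $Q_i$'s all share the backbone needed to cover the base rows and differ essentially in constants, and queries differing only by a constant are comparable under containment, so distinct covered elements may yield fewer than $m$ surviving CIM queries. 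Third, spurious connected consistent queries (the paper's own construction admits a trivial projection query over the $E$-relation) can inflate the count past $m$ without a cover. None of these is obviously fatal, but each must be ruled out by the unbuilt gadget, and together they constitute the actual content of the proof. If you want to salvage the Set Cover route, the paper's trick is the hint: collapse the counting problem by arranging that a \emph{single} connected consistent query exists iff all elements are covered (e.g., one query with an atom per element, each forced by connectivity to join to an abstracted set-tuple covering it), and take the privacy threshold to be $1$.
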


\begin{figure}[!htb]
\begin{footnotesize}
\begin{subfigure}{0.5\linewidth}
\centering
\begin{tabular}{| c | c | c | c | c | c | c | c |}
\hline prov. & V & E & N\\
\hline $VC_1$ & $v_1$ & $e_i$ & 2 \\
\hline $\vdots$ & $\vdots$ & $\vdots$ & $\vdots$ \\
\hline $VC_{2m}$ & $v_n$ & $e_j$ & 2 \\
\hline yes & 0 & 1 & 3  \\
\hline
\end{tabular}
\caption{Relation $VC$}\label{fig:hardness_vc}
\end{subfigure}%
\begin{subfigure}{0.5\linewidth}
\centering
\begin{tabular}{| c | c | c | c | c | c | c | c |}
\hline prov. & J \\
\hline $E_1$ & $e_1$ \\
\hline $\vdots$ & $\vdots$ \\
\hline $E_{m}$ & $e_m$ \\
\hline ec & 1 \\
\hline
\end{tabular}
\caption{Relation $E$}\label{hardness_e}
\end{subfigure}
\end{footnotesize}
\caption{Database instance for the proof of Proposition \ref{prop:hardness}}
\end{figure}

\begin{figure}[ht]
    \centering \footnotesize
    \begin{tabular}{| c | c |} 
        \hline Output & Provenance \\ [0.5ex]
        \hline $e_1, \ldots, e_m$ & $VC_1 \cdots VC_{k} \cdot E_1 \cdots E_m$\\ 
        \hline $1, \ldots, 1$ & $yes^k \cdot ec^{m}$\\ 
        \hline
    \end{tabular}
\caption{\ex\ for the proof of Proposition \ref{prop:hardness} \amir{for a version without a consistent query w.r.t the original \ex, remove from the \ex\ the $E_1 \cdots E_m$ tuples from the first row and the $ec^m$ tuples from the second row}} \label{fig:hardness_example}
\end{figure}



\begin{proof}
We show that the problem is NP-hard by reduction from the decision problem version of Vertex Cover. The input to Vertex Cover is a graph $G = (V,E)$, where $|V| = n, |E| = m$ and an integer $k \in \mathbb{N}$, and the solution is a set of vertices $C \subseteq V$ such that $\forall e \in E.~ C\cap e \neq \emptyset$. 

Given such an input, we define the relation $VC(V,E,N)$, where $(v_i, e_j, 2) \in VC$ iff $v_i \in e_j$ (these tuples are denoted by $VC_1, \ldots, VC_{2m}$). $VC$ also contains the additional tuple $(0,1, 3)$, denoted by $yes$. The relation is shown in Figure \ref{fig:hardness_vc}.
Next, we define the relation $E$ depicted in Figure \ref{hardness_e}: for each edge $e_j\in E$, we have a tuple $E(e_j)$ denoted by $E_j$ and an additional tuple $E(1)$ denoted by $ec$. 

We then define a \ex\ $Ex$ with two rows as seen in Figure \ref{fig:hardness_example}, where $VC_1, \ldots, VC_k$ are chosen at random. Clearly, $Ex$ has a consistent query w.r.t. it which just projects the attributes of the atoms with relation $E$ to the output. 
We also define the abstraction tree to be $T$ where $L(T) = VC_1, \ldots, VC_N$ and each $VC_i$ is connected to a node $VC(\widetilde{v},\widetilde{e} , 2)$ (denoted by $\widetilde{VC}$), where the weight of each edge is $1$ \amir{or something with entropy}. 

Now, we claim that $G$ has a cover of size at most $k$ iff there is an  abstraction function that gives privacy at least $1$ with at most $k$ loss of information. 

($\Leftarrow$) Suppose we have an abstraction function $A_T$ that gives privacy at least $1$ with at most $k$ loss of information. Let $Q$ be a CIM query w.r.t. $A_T(Ex)$. In particular, $Q$ is consistent w.r.t. a certain concretization of $A_T(Ex)$ (Definition \ref{def:consistent}). Assume that the monomial in the first row of this concretization is $VC_1'\cdots VC_k'\cdot E_1 \cdots E_m$, like in this illustration:
\begin{center}
\begin{footnotesize}
\begin{tabular}{| c | c |} 
        \hline Output & Provenance \\ [0.5ex]
        \hline $e_1, \ldots, e_m$ & $VC_1'\cdots VC_k'\cdot E_1 \cdots E_m$\\ 
        \hline $1, \ldots, 1$ & $yes^k \cdot ec^{m}$\\ 
        \hline
\end{tabular}
\end{footnotesize}
\end{center}
Given this concretization, $Q$ should be connected, i.e., all atoms should have at least one join to another atom. Note that every atom with relation $E$ has to be connected to an atom with relation $VC$ (as they cannot be connected to each other). 
Suppose $Q$ is of the form:
\begin{equation*}\label{eq:shapConstraints}
\begin{split}
Q(x_1, \ldots, x_m) :- VC(y_1, x_1, z), \ldots, VC(y_k, x_m, z), \\E(x_1), \ldots, E(x_m)
\end{split}
\end{equation*}
This structure is necessary because each $E(x_j)$ has to be connected to some $VC(y_i, x_j, z)$ as it is the only option to create a connected query. 
Thus, given the provenance of the first row, $Q$ maps $x_1, \ldots, x_m$ to $e_1, \ldots, e_m$. 
We choose the vertices represented by $VC_1'\cdots VC_k'$ as the vertex cover for the graph $G$, since each tuple $VC_i' = VC(v_i,e_j, 2)$ represents a cover of $e_j$ by $v_i$ and all edges $e_1, \ldots, e_m$ appear in $VC_1'\cdots VC_k'$.

($\Rightarrow$) Suppose we have a cover of size $\leq k$, $\{v_1, \ldots, v_k\}$. 
We show how to generate an abstraction function from this cover. 
$A_T$ would abstract all $VC_1, \ldots, VC_k$ to $\widetilde{VC}$:
\begin{center}
\begin{footnotesize}
\begin{tabular}{| c | c |} 
        \hline Output & Provenance \\ [0.5ex]
        \hline $e_1, \ldots, e_m$ & $\widetilde{VC}^k\cdot E_1 \cdots E_m$\\ 
        \hline $1, \ldots, 1$ & $yes^k \cdot ec^{m}$\\ 
        \hline
\end{tabular}
\end{footnotesize}
\end{center}
Clearly, this gives $k$ loss of information. Next, we show that there is at least one consistent connected query, which, in particular, shows that there exists a CIM query (it may be contained in the query we show, but its existence is will be proved). 

First, we generate the concretization that our connected query will be consistent with. 
For every $v_i \in \{v_1, \ldots, v_k\}$, and the edge covered by it $e_j$, we replace an instance of $\widetilde{VC}$ with $VC(v_i, e_j, 2)$. This creates the concretization shown in the previous part of the proof. We now claim that the following query is connected and consistent w.r.t. this concretization:
\begin{equation*}\label{eq:shapConstraints}
\begin{split}
Q(x_1, \ldots, x_m) :- VC(y_1, x_1, z), \ldots, VC(y_k, x_m, z),\\ E(x_1), \ldots, E(x_m)
\end{split}
\end{equation*}
$Q$ is clearly connected. To see consistency, assign the $VC_i'$ tuples to the $VC$ atoms and the $E_j$ tuples to the $E$ atoms. 
\end{proof}

\reva{We have defined the problem for general semirings and UCQs (with aggregation). Now, we discuss the solution, starting from \NX\ and CQs. 
At the end of this section we consider other versions of the problem, where the provenance is given in a different model and the query class is more general}.
As shown above, the problem is intractable, and our algorithms incur exponential time in the worst case -- yet we design heuristics that significantly improve the performance in practice. We first give a high-level description of our solution and then introduce our algorithms.

\subsection{High Level Description} \label{sec:algo-high-level}
The brute force approach for solving the problem would go over all possible abstractions, compute the privacy and the loss of information of each and return the one with minimal loss of information among the ones that meet the privacy threshold. We next overview of how each of these components may be improved. 
The observed improvement over the brute force solution is reported in Section~\ref{sec:results}. 


{\bf Efficiently computing privacy.~}
The privacy computation is the most time consuming part of the solution (see Section \ref{sec:algo-details}). 
We next give an overview of how the privacy induced by a given abstraction may be efficiently computed. 
\begin{enumerate}[leftmargin=*,wide=0pt]
\item
{\em Computing privacy row by row.~}
Consistency with a \ex\ is monotone in the sense that each consistent query must be consistent with each subset of the rows in \ex. 
We use this fact to effectively compute privacy. For every abstracted \ex\ $\absEx{Ex}$, we first check whether the \ex\ containing only the first two rows of $\absEx{Ex}$ has at least $k$ CIM queries w.r.t. it, where $k$ is the privacy threshold. 
We store only concretizations of $\absEx{Ex}$ that admit consistent connected queries by storing which concretization creates each query. Then, we add the next row of $\absEx{Ex}$ to the stored concretizations from the previous step and repeat these steps. 
\item
{\em Concretizations connectivity.~}
We say that a \ex\ $Ex$ is connected if every provenance monomial in $Ex$ defines a connected graph where the nodes are the tuples and there is an edge between two tuples if they share a constant (e.g., $R(1,{\mathbf 2}), R({\mathbf 2},3)$ are connected). Observe that a connected consistent query cannot be obtained from a disconnected \ex; therefore, disconnected concretizations can be filtered out. 
\item
{\em Caching information about concretizations and queries.~}
Given two abstractions $\absEx{Ex}$, $\absEx{Ex}'$, it is common that $C(\absEx{Ex}) \cap C(\absEx{Ex}')$ contains multiple shared concretizations. Therefore, we use caching to store the consistent connected queries w.r.t. each concretization, to avoid repetitive computations (we do not store the CIM queries since the minimality of a query is measured w.r.t. the concretization set, which varies between different abstractions). Additionally, for each concretization, we store whether it is connected or not and use it in the following computations that involve this concretization.
\end{enumerate}

{\bf Efficiently finding an optimal abstraction.~}
Our next goal is to improve the n\"aive iteration over all abstractions. If we cleverly choose the {\em order in which we iterate over the abstractions} and avoid complicated calculations for irrelevant abstractions we can find a solution quickly. To do so, we use the following components.
In Section~\ref{sec:results}, we will show that these components have improved performance by a factor of over $500\times$.

\begin{enumerate}[leftmargin=*,wide=0pt]
    \item
{\em Sorting abstractions.~}
When we iterate over all the abstractions, we sort them in increasing order according to the number of tree edges they use, prioritizing abstractions with small loss of information. In this manner, abstractions that use fewer edges of the abstraction tree appear first (these are the easiest to compute privacy for since they have fewer concretizations).
Practically, such abstractions often meet the privacy threshold.
\item
{\em Prioritizing loss of information over privacy computation.~}
Unlike the loss of information that can be quickly and efficiently computed, computing the privacy of an abstracted \ex\ is a complex and pricey procedure (see Section \ref{sec:algo-details}). Therefore, given an abstracted \ex, we first compute the loss of information for each abstraction and only then compute the privacy. After finding the first abstraction that satisfies the privacy threshold, we only have to compute privacy for abstractions that incur less information loss.
\end{enumerate}


\subsection{Algorithm Details}\label{sec:algo-details}
We next detail the implementation of the ideas we have described.

{\bf Privacy computation.~}
We use the following components:
\begin{enumerate}[leftmargin=*,wide=0pt]
    \item\label{itm:consistent}{\em Finding consistent queries.~} To find all consistent queries w.r.t. a concretization we recall the algorithm $FindConsistentQuery$ from \cite{DeutchG19} that finds one consistent query for a given \ex\ by modeling the two provenance monomials of the first two rows in the \ex\ as a bipartite graph and finding partial matchings that `cover' the output attributes. The algorithm returns the first consistent query that is generated by such a matching. We adjust this algorithm to output all the consistent queries from all matchings instead of returning the first one we find. We then minimize each query using the lattice algorithm described in the paper. 
    
    \item {\em Finding minimal queries.~} 
    Given a set of queries $Q$, $q \in Q$ is minimal 
    if there is no query $q' \in Q$ such that $q' \subsetneq q$. We iterate over all the queries $q \in Q$, and for every $q' \in Q, q' \neq q$ we check whether $q' \subsetneq q$ using the procedure $QueryContainment$ that checks query containment (adapted from~\cite{CHEKURI2000211}).

\end{enumerate}

%
%

Algorithm \ref{algo:privacy-opt} computes the privacy of a given abstracted \ex\ $\absEx{Ex}$. The input is an abstracted \ex\ $\absEx{Ex}$ with $n$ rows, an abstraction tree $T$ and the privacy threshold $k$. The output is the privacy guaranteed by $\absEx{Ex}$, or $-1$ if the privacy is smaller than $k$. The algorithm initializes a set of good concretizations $GoodConc$ (concretizations that create consistent connected queries, as described in the `Computing privacy row by row' component in Section \ref{sec:algo-high-level}) with the first row of $\absEx{Ex}$ (line~\ref{ln:priv:good-conc}).
Then, it iterates over the rows in $\absEx{Ex}$ (lines \ref{ln:priv:rows-loop}--\ref{ln:priv:end-row-for}), and for each row preforms the following operations.
First, it collects the concretization sets of each abstracted \ex\ in $GoodConc$ combined with the current row from $\absEx{Ex}$ (lines \ref{ln:priv:conc-init}--\ref{ln:priv:get-conc}).
Second, it removes all the disconnected concretizations (line \ref{ln:priv:rem-conc-disc}) while for each concretization it uses caching to store whether it is connected or not, to avoid redundant computations.
Third, it collects all consistent queries w.r.t. every connected concretization and adds them to a set $Q_{cons}$ and to a map $QueriesToConc$ that stores, for each concretization, the queries that were created from it (lines \ref{ln:priv:init-cons-q-to-cons}--\ref{ln:priv:cons-add}).
Then, it removes all the disconnected queries from $Q_{cons}$ (line \ref{ln:priv:conn-q}) and also uses caching to store whether it is connected or not.
After that, it checks whether the number of connected queries is lower than our privacy threshold, and if so it returns $-1$ as the privacy does not satisfy the threshold (lines \ref{ln:priv:not-sat-priv-1}--\ref{ln:priv:not-sat-priv-1-end}).
Then, the algorithm re-sets the good concretization set $GoodConc$ with all the concretizations that create consistent connected queries using $QueriesToConc$ (lines \ref{ln:priv:good-conc-clear}--\ref{ln:priv:add-good-conc}). These concretizations will continue to the next iteration.
Finally, the algorithm selects only minimal queries (lines \ref{ln:priv:min-q}--\ref{ln:priv:end-row-for}) and checks again whether their number satisfies the privacy threshold (line \ref{ln:priv:not-sat-priv-2}). 
After the algorithm iterates over all rows, it returns the number of CIM queries (line \ref{ln:priv:output-priv}). 

\IncMargin{0.8em}
\begin{algorithm}[ht]
\footnotesize
    \caption{Compute Privacy}
    \label{algo:privacy-opt}
    \LinesNumbered
    
	\SetKwInOut{Input}{input}
	\Input{Abstracted $\ex\ \absEx{Ex}$, abstraction tree $T$, privacy threshold $k$}
	\SetKwInOut{Output}{output}
	\Output{The privacy of $\absEx{Ex}$ if it's at least $k$ or $-1$ otherwise}
	\BlankLine
	
	\nonl Let $\absEx{Ex}_i$ be the $i$th row of $\absEx{Ex}$ and $n$ be the number of rows of $\absEx{Ex}$\;
	$GoodConc \gets \{\absEx{Ex}_1\}$\; \label{ln:priv:good-conc}
	\For{$i \in \{2,\dotsc,n\}$}
	{\label{ln:priv:rows-loop}
	    $C \gets \emptyset$\;\label{ln:priv:conc-init}
	    \For{$gc \in GoodConc$}
	    {
	    \nonl $gc + \absEx{Ex}_i$ denotes appending the $i$'th row of $\absEx{Ex}$ to $gc$\;
	        $C \gets C \cup GetConcretizationSet(gc + \absEx{Ex}_i, T)$\;\label{ln:priv:get-conc}
	    }
	    $C_{connect} \gets RemoveDisconncted(C)$\; \label{ln:priv:rem-conc-disc}
        $Q_{cons} \gets \emptyset$;
        $QueriesToConc \gets (\emptyset, \emptyset)$\; \label{ln:priv:init-cons-q-to-cons}
        \For{$c \in C_{connect}$}
        {
            $Q_{cur} \gets GetConsistentsQueries(c)$\; \label{ln:priv:get-cons}
            $Q_{cons} \gets Q_{cons} \cup Q_{cur}$\;
            \For{$q \in Q_{cur}$}
            {
                $QueriesToConc \gets QueriesToConc \cup (q,c)$\; \label{ln:priv:add-q-to-cons}
            }
            \label{ln:priv:cons-add}
        }
        $Q_{conn} \gets GetConnectedQueries(Q_{cons})$\; \label{ln:priv:conn-q}
        \If{$|Q_{conn}| < k$}
        {\label{ln:priv:not-sat-priv-1}
            \Return $-1$\; \label{ln:priv:not-sat-priv-1-end}
        }
        $GoodConc \gets \emptyset$\; \label{ln:priv:good-conc-clear}
        \For{$q \in Q_{conn}$}
        {
            \For{$c \in QueriesToConc(q)$}
            {
                $GoodConc \gets GoodConc \cup \{c\}$\; \label{ln:priv:add-good-conc}
            }
        }
        $Q_{cim} \gets GetMinimalQueries(Q_{conn})$\; \label{ln:priv:min-q}
        \If{$|Q_{cim}| < k$}
        {\label{ln:priv:not-sat-priv-2}
            \Return $-1$\;
\label{ln:priv:end-row-for}        }
    }
    \Return $|Q_{cim}|$\; \label{ln:priv:output-priv}
\end{algorithm}
\DecMargin{0.8em}

\begin{example}\label{ex:privacy-algo}
Consider the \ex\ $Ex_{real}$, the tree $T$, the abstraction function $A_T^3$, and the abstracted \ex\ $\absEx{Ex}_{abs3}=A_T^3(Ex_{real})$ (depicted in Figures \ref{fig:ex-real}, \ref{fig:abstree}, \ref{fig:absfuncs}, and \ref{fig:absexamples}, respectively). Assume our privacy threshold is $2$ (i.e., we want our privacy to be at least 2).
First, the algorithm generates the concretization set $C(\absEx{Ex}_{abs3})$ (shown in Figure \ref{fig:cons-set-ex-abs3}) and removes the disconnected concretizations (which are \reva{$c_1$ and $c_4$}).
For each of the remaining concretization, the algorithm finds the consistent queries and amongst these, finds the CIM queries. As we saw in Example \ref{ex:cim-queries-all-conc}, after removing the disconnected queries we are left with $Q_{real}$ and $Q_{general}$ (shown in Table \ref{fig:queries}) and since $Q_{real} \subseteq Q_{general}$ there is only one CIM query which is $Q_{real}$ so the algorithm will return $-1$. 
\end{example}

{\bf Loss of information computation.~}
The loss of information can be easily computed given the abstracted \ex\ $\absEx{Ex}$ and the abstraction tree. If we use entropy with discrete uniform distribution, then the loss of information is equal to $\ln(|C(\absEx{Ex})|)$, i.e., the size of the concretization set. 
For other distributions, we can find the concretization set with the abstraction tree and calculate the entropy using the given distribution.


{\bf Optimal abstraction algorithm.~}
Given a \ex, an abstraction tree and a privacy threshold, Algorithm~\ref{algo:optimal-abs} finds the optimal abstraction which guarantees the threshold with minimal loss of information. First, the algorithm creates a set of all possible abstraction (line \ref{ln:opt:all-pos-abs}) and sorts it in increasing order by the number of edges in the abstraction tree used by each of the abstractions (ties are broken by their loss of information, line \ref{ln:opt:abs-sort}). Then it initializes the optimal abstraction to be $null$ and the optimal loss of information to be $\infty$ (line \ref{ln:opt:init}). For each abstraction, the algorithm first computes the loss of information (line \ref{ln:opt:loi}). If the loss of information is lower than the optimal loss observed, it computes the privacy (line \ref{ln:opt:privacy}), otherwise, it continues to the next abstraction. If the computed privacy meets the privacy threshold, the algorithm updates the optimal abstraction to be the current one, and updates the current optimal loss of information (lines \ref{ln:opt:new-best-start}--\ref{ln:opt:new-best-end}). Finally, it returns the abstraction that meets the privacy threshold and incurred the minimum loss of information (or $\emptyset$ if no abstraction has been found).

\IncMargin{1em}
\begin{algorithm}[ht]
\footnotesize
    \caption{Find Optimal Abstraction}
    \label{algo:optimal-abs}
    \LinesNumbered
    
	\SetKwInOut{Input}{input}
	\Input{\ex\ $Ex$, abstraction tree $T$, privacy threshold $k$}
	\SetKwInOut{Output}{output}
	\Output{Optimal abstraction}
	\BlankLine
	
    $A \gets AllPossibleAbstractions(Ex, T)$\;\label{ln:opt:all-pos-abs}
    $A_{sort} \gets SortAbstractions(A,T)$\;\label{ln:opt:abs-sort}
    $a_{best} \gets \varnothing$;
    $l_{best} \gets \infty$\; \label{ln:opt:init}
    \For{$a \in A_{sort}$}
    {
        $l \gets GetLossOfInformation(a)$\;\label{ln:opt:loi}
        \If{$l < l_{best}$}
        {\label{ln:opt:loi-condition}
            $p \gets ComputePrivacy(a)$\;\label{ln:opt:privacy}
            \If{$p \geq k$}
            {\label{ln:opt:new-best-start}
                $a_{best} \gets a$;
                $l_{best} \gets l$\; \label{ln:opt:new-best-end}
            }
        }
    }
    \Return $a_{best}$\;
\end{algorithm}
\DecMargin{1em}

\begin{example}
Reconsider the \ex\ $Ex_{real}$ and the abstraction tree $T$ (shown in Figures \ref{fig:ex-real} and \ref{fig:abstree} resp.). 
Assume that the privacy threshold is $2$ and the loss of information is entropy with discrete uniform distribution. 
First, the algorithm creates a set of all possible abstracted \ex s of $Ex_{real}$. Among these we have $\absEx{Ex}_{abs1}$ and $\absEx{Ex}_{abs3}$ (depicted in Figure \ref{fig:absexamples}). The corresponding abstraction functions are $A_T^1$ and $A_T^3$ (shown in Figure \ref{fig:absfuncs}).
The algorithm starts iterating all abstractions until it gets to $\absEx{Ex}_{abs3}=A_T^3(Ex_{real})$, which does not meet the threshold (as shown in Example \ref{ex:privacy-algo}). 
Then, the algorithm gets to $\absEx{Ex}_{abs1}=A_T^1(Ex_{real})$. Its privacy is $2$ (see Example \ref{ex:privacy}), satisfying the threshold. 
The loss of information is $\ln|C(\absEx{Ex}_{abs1})|=\ln(5 \cdot 3)=\ln15\approx2.708$ (Proposition \ref{prop:concretization-set-size}).
Since this is the first abstraction that meets the threshold, we keep it as the current optimal one.
The algorithm continues to iterate over all other abstractions for which the loss of information is smaller than the current optimal one. Since all of them do not satisfy the privacy threshold, it returns $\absEx{Ex}_{abs1}$ as the optimal abstraction.
\end{example}

{\bf Complexity.~}
Given a \ex\ $Ex$ and an abstraction tree $T$, the complexity of Algorithm \ref{algo:optimal-abs} for finding the optimal abstraction is $O((hl)^nq)$ 
where $h$ is the height of $T$, $l=|L_T|$ is the number of leaves in $T$, $n = |Var(Ex) \cap L_T|$ is the number of variables in $Ex$ that appears in $T$ and $q$ is an exponential expression in the arity (all considered queries have the same arity) which involves the consistency checks \cite{DeutchG19}, connectivity check and containment checks \cite{ChandraMerlin}.
First, the number of abstractions is $O(h^n)$ since there are $n$ variables that can be abstracted, and for each one of them we have $h$ options of abstracted values. Thus, for each abstraction we compute the concretization set which is of size $O(l^n)$ (since $|C(A_T(Ex))| \leq |L_T|^n$ from Proposition \ref{prop:concretization-set-size}). Finally, for each concretization we check for consistency, connectivity and containment in $O(q)$ where $q$ is exponential in the query arity. Our experimental evaluation that follows shows the practical efficiency of our solution.

\begin{table}[ht]
    \centering \footnotesize
    \caption{\reva{Privacy computation for the semirings (or semimodules) from \cite{Greenicdt09,AmsterdamerDT11} and different query classes. The approach we have detailed so far is designed for the scenario given in the gray cell and the modifications needed to adjust it to the other scenarios are given in the corresponding cells. The \lin\ semiring is discussed in the text}} \label{tbl:semirings_queries}
    \begin{tabularx}{\linewidth}{| c | X | X | c | c | c |}
        \cline{2-3}\multicolumn{1}{c|}{} & 
        {\bf \NX, \BX} & 
        {\bf \trio, \posbool, \why} \\
        \hline \multirow{2}{*}{CQ} & \multirow{2}{*}{\cellcolor{gray!25}Alg. \ref{algo:privacy-opt}} & \cellcolor{red!25} Change line \ref{ln:priv:get-cons} to Alg. 2 in \cite{DeutchG19} \\ 
        \cline{1-3} \multirow{2}{*}{UCQ, AGG} & \cellcolor{orange!25} Change lines \ref{ln:priv:conn-q} and \ref{ln:priv:min-q} &  
        \cellcolor{green!25} Change lines \ref{ln:priv:get-cons}, \ref{ln:priv:conn-q} and \ref{ln:priv:min-q} \\
        \hline
    \end{tabularx}
\end{table}

\reva{
{\bf Extending the solution.~}
Table \ref{tbl:semirings_queries} summarizes the augmentations needed for Algorithm \ref{algo:privacy-opt} when the provenance in the \ex\ is given in different semirings (table columns) and the query is permitted to be CQ, UCQ or aggregate query as specified in Section \ref{sec:cqs_prov} (table rows). 

{\em Gray cell.~}
First, for the \NX\ and \BX\ semirings, Algorithm \ref{algo:privacy-opt} does not need to be modified for CQs, as the \BX\ semiring simply drops coefficients from the polynomials and coefficients do not have an impact on the algorithm. 

{\em Orange cell.~}
For UCQs (and aggregate queries), line \ref{ln:priv:conn-q} needs to be adjusted to account for the definition of disconnected UCQ (a UCQ containing a disconnected CQ). Moreover, in line \ref{ln:priv:min-q} we may get CIM queries that are trivial, i.e., the simple union of the tuples that participate in the provenance of a concretization is a CIM query. Therefore, we can augment this procedure by eliminating such trivial queries by, e.g., changing Definition \ref{def:cim_global} that every CIM query has to have at least one variable. 

{\em Red cell.~}
The semirings \trio, \posbool and \why\ drop coefficients as well as powers and even monomials subsumed by other monomials (\posbool). The procedure for finding consistent queries in line \ref{ln:priv:get-cons}, therefore, needs to be adjusted to Algorithm 2 from \cite{DeutchG19} that finds consistent queries when given the provenance in these semirings. The algorithm accounts for the missing powers by expanding the provenance as much as needed until a consistent query is found. 
The algorithm proposed in \cite{DeutchG19} needs to be augmented as specified in Bullet (\ref{itm:consistent}) at the beginning of Section \ref{sec:algo-details}. 

{\em Green cell.~}
Similarly, for UCQs and aggregate queries, lines \ref{ln:priv:get-cons}, \ref{ln:priv:conn-q}, and \ref{ln:priv:min-q} have to change in the aforementioned manners. 

{\em The \lin\ semiring.~}
For the \lin\ semiring, adapting our solution is more challenging. 
This semiring incurs a significant loss of information about the query structure \cite{Greenicdt09}, both due to the nature of the semiring and due to the order relation in Definition~\ref{def:order}. For example, the provenance represented in the \NX\ semiring $2ab^2$ is represented as $\{a,b\}$. Furthermore, the order relation is translated to set containment, and thus, the provenance shown in the \ex\ can be any subset of the original set, i.e., the empty subset is also valid as provenance. }
\common{
If only part of the provenance set is given (i.e., there are missing tuples in the provenance set), we may employ an approach that `completes' the provenance in the most reasonable way for every concretization \cite{GiladM20} and then apply our solution as a subsequent step. If no provenance is given, we may be able to utilize methods from the field of query-by-example and query reverse-engineering \cite{Shen,KalashnikovLS18,joinQueries,qbo,TanZES17} to find the query structure strictly from the output, such as column mappings and candidate query generation. 
This will be the subject of future work.
}

\revb{
{\bf The dual problem.~}
The dual problem is defined as searching for the optimal abstraction whose loss of information does not exceed a certain threshold $l_{max}$. Algorithm \ref{algo:optimal-abs} can be adjusted to solve this problem using the following changes: (1) initializing $p_{best} \gets 0$ in line \ref{ln:opt:init} ($p_{best}$ will store the current optimal privacy), (2) changing the condition in line \ref{ln:opt:loi-condition} to be $l < min(l_{best}, l_{max})$ (this will limit the abstraction we scan to those which do not exceed the given threshold $l_{max}$), (3) changing the condition in line \ref{ln:opt:new-best-start} to be $p \geq p_{best}$ (this will optimize the privacy of the output abstraction) and (4) adding $p_{best} \gets p$ to line \ref{ln:opt:new-best-end} (this will update the current best privacy for the next abstractions we scan).
With those changes, the algorithm terminates if the loss of information exceeds $l_{max}$. This reduces the number of abstractions considered, thus the dual problem is more efficiently solvable.
}

\common{
{\bf Constructing abstraction trees.~}
Domain experts who know the database structure may be able to phrase rules that place annotations of similar tuples in proximity in the tree. For example, tuples containing the same values in the same attributes (e.g., Figure \ref{fig:abstree}), or are included in the same relation, etc. 
Another possible manner of constructing abstraction trees is based on {\em ontologies} that encode abstractions for the different tuples by grouping tuples with similar meaning. 
Existing methods for identifying semantic relationships between tuples may be used \cite{li2005learning,IdrissiBB15}. 
To further hone the constructed tree in terms of height and size, users could input the relevant queries and database to our system and try to adjust those parameters so that the system incurs the fastest runtime (see Figures \ref{fig:exp-tree-size-runtime} and \ref{fig:exp-tree-height-runtime} in Section \ref{section:experiments}).
The height can be adjusted, e.g., by adding or removing sub-categories in the ontology. The size can be modified by adding more tuples from the database to the tree. 
If the tree contains more tuple annotations, more abstractions are possible, which affects the possibility of finding an abstraction that meets the privacy threshold using less edges in the abstraction tree. 
}

\section{Experiments} \label{section:experiments}
We next detail the settings of our experimental study and its results. We further show end-to-end use cases of our framework. 





The algorithms were implemented in Java 13 using the TreeNode interface implementation to represent the abstraction trees.
All experiments were performed on Mac OS 10.15, 64-bit, with 16GB of RAM and Intel Quad-Core i7 2.2 GHz processor.

\subsection{Settings and Summary of the Results} \label{sec:exp-settings}
We next review the settings and the summary of our experiments.

{\bf Settings.~}
We study the scalability of our solution in terms of runtimes and the size of the optimal abstraction, i.e., the output of the algorithm (we measure the size as the number of edges in the abstraction tree that were used to get the optimal abstraction). 
For runtime experiments and the size of optimal abstraction experiments, we use the settings shown in Table \ref{tbl:scalability}.
To our knowledge, there is no comparable solution in previous work. We thus use the brute force approach as a baseline, studying the effects of each of our algorithm components described in Section \ref{sec:algo-high-level}. 
We have used the TPC-H dataset \cite{tpch} which consists of a suite of business oriented queries \reva{and the IMDB movies dataset \cite{imdb}}. We have randomly sampled a database of 1GB for all experiments.
Our basic settings is a privacy threshold of 5; 5-levels abstraction tree with 10000 leaves (10244 nodes); 2 rows in \ex; and discrete uniform distribution for the loss of information measure.

{\bf Abstraction trees.~} The TPC-H abstraction tree consists of a single relation `lineitem', randomly divided into subcategories evenly throughout the tree. \reva{The IMDB abstraction tree was created as follows:
(1) Directors and actors were categorized by their year of birth, which were further categorized by ranges of years. 
(2) Tables that connect actors and directors to movies were categorized similarly.
(3) Genres were categorized by the genre type.
(4) Movies were categorized by their released year, which were further categorized by ranges.
(5) Each one of the previous was categorized under a main category and all of those were categorized under the root.}

{\bf Queries.~} We have used the TPC-H queries 
whose details appear in Table~\ref{tbl:tpch-queries}. \reva{We have adapted those queries to our setting, i.e., we have converted them to CQs by dropping aggregation and arithmetics.} The queries are relatively complex (e.g., Q21 includes a triple self-join, i.e., a relation name occurring in 3 atoms). \reva{We also use the following IMDB queries:
(Q1) All the actors starring in a movie from 1995,
(Q2) All the actors who starred in a drama movie directed by an american director,
(Q3) All the actors which have a bacon number of 1 (actors who act in a movie with Kevin Bacon),
(Q4) All the directors which created an action movie and a comedy movie,
(Q5) All the comedy movies starred by an actor born in 1978,
(Q6) All the directors who directed a movie starring Tom Cruise,
and (Q7) All the actors who act in at least two action movies.}
All experiments were performed with all the queries. However, to avoid visual overloading in graphs and since the results of queries \reva{TPCH-Q5, TPCH-Q9, IMDB-Q3 and IMDB-Q4 were very similar to the results of queries TPCH-Q3, TPCH-Q7, IMDB-Q6 and IMDB-Q7} respectively, we omit their curves from the graphs.

{\bf Summary of the results.~}
\setdefaultleftmargin{0cm}{0cm}{}{}{}{}
\begin{compactenum}
\item Our solution scales well with the first three parameters in Table~\ref{tbl:scalability}, due to the components presented in Section \ref{sec:algo-high-level}. 
\item An increase in the number of rows in the \ex\ causes a significant runtime increase compared to the other parameters since Algorithm \ref{algo:optimal-abs} often has to iterate and analyze all possible abstractions, as in the brute force approach.
\item The tree height that yields minimum runtime for finding an optimal abstraction varies according to the query structure, though the number of required tree edges used 
steadily increases. 
\item As the size of the tree increases, the time for finding the optimal abstractions also increases, however, the number of required tree edges used for the abstraction decreases.  
\item Our solution is not sensitive to the loss of information distribution used, i.e., changing this parameter will not significantly change the runtime. However, the optimal abstraction may change since the distributions has changed, so another abstraction can now incur a smaller loss of information.
\item The effect of the components described in Section \ref{sec:algo-high-level} was dramatic in improving the scalability of our solution.
\item \revb{Compared to a provenance compression approach that also utilizes abstraction trees \cite{DeutchMR19}, our solution is able to output abstractions with a significantly lower loss of information.}
\item \common{We conducted a comprehensive user study, showing that users are unable to infer the original query from the abstracted \ex, while still being able to use the provenance to answer hypothetical questions about the data.}
\end{compactenum}

\begin{table}
    \centering \scriptsize
    \caption{Scalability experiments settings for Figures \ref{fig:exp-privacy-runtime}--\ref{fig:exp-ex-rows-runtime}}\label{tbl:scalability}
    \begin{tabular}{| c | c | c | c | c | c |}
        \hline {\bf Figures} &
        \begin{tabular}{@{}c@{}}{\bf Privacy}\\{\bf threshold}\end{tabular} & 
        \begin{tabular}{@{}c@{}}{\bf Abst.}\\{\bf tree size}\end{tabular} & 
        \begin{tabular}{@{}c@{}}{\bf Abst.}\\{\bf tree height}\end{tabular} & 
        \begin{tabular}{@{}c@{}}{\bf \# rows}\\{\bf in}\\{\bf \ex}\end{tabular} \\
        \hline \ref{fig:exp-privacy-runtime}, \ref{fig:exp-privacy-edges}, \ref{fig:exp-privacy-loi} & varying & 10244 & 5 & 2 \\
        \hline \ref{fig:exp-tree-size-runtime}, \ref{fig:exp-tree-size-edges} & 5 & varying & 5 & 2 \\
        \hline \ref{fig:exp-tree-height-runtime}, \ref{fig:exp-tree-height-edges} & 5 & 10244 & varying & 2 \\
        \hline \ref{fig:exp-joins-runtime} & 5 & 10244 & 5 & 2 \\
        \hline \ref{fig:exp-ex-rows-runtime} & 5 & 10244 & 5 & varying \\
        \hline
    \end{tabular}
\end{table}

\begin{table}
    \begin{center}
    \scriptsize
    \caption{TPC-H \reva{and IMDB} queries for the experiments}\label{tbl:tpch-queries}
    \begin{minipage}[t]{\linewidth}
    \centering
    \begin{tabular}{ | c | c | c | c |}
        \hline {\bf Query} & {\bf \# Atoms} & {\bf \# Joins} \\
        \hline TPCH-Q3 & 3 & 2 \\
        \hline TPCH-Q4 & 2 & 1 \\
        \hline TPCH-Q5 & 7 & 6 \\
        \hline TPCH-Q7 & 6 & 5 \\
        \hline TPCH-Q9 & 6 & 5 \\
        \hline TPCH-Q10 & 4 & 3 \\
        \hline TPCH-Q21 & 6 & 5 \\
        \hline
    \end{tabular}
    \end{minipage}\\
    \vspace{4mm}
    \begin{minipage}[t]{\linewidth}
    \centering
    \begin{tabular}{ | c | c | c | c |}
        \hline {\bf Query} & {\bf \# Atoms} & {\bf \# Joins} \\
        \hline IMDB-Q1 & 3 & 2 \\
        \hline IMDB-Q2 & 6 & 5 \\
        \hline IMDB-Q3 & 5 & 4 \\
        \hline IMDB-Q4 & 7 & 6 \\
        \hline IMDB-Q5 & 4 & 3 \\
        \hline IMDB-Q6 & 5 & 4 \\
        \hline IMDB-Q7 & 7 & 6 \\
        \hline
    \end{tabular}
    \end{minipage}
    \end{center} 
\end{table}

\subsection{Results} \label{sec:results}
We next detail our scalability results for the different settings.

{\bf Privacy threshold.~}
For this experiment we have increased the privacy threshold while fixing the other parameters (first row in Table \ref{tbl:scalability}). There are no strong and clear criteria on how to choose the privacy threshold exactly. For example, in the healthcare world when medical data is shared with $k$-anonymity property with a small number of people (typically for research purposes), $k$ is often chosen between 5 and 15. Thus, we have increased the privacy threshold from 2 to 20.
For privacy thresholds larger than 20, we noticed that the optimal abstraction returned had a significantly larger privacy than requested. For example, for a privacy threshold of 23, in 90\% of the runs the algorithm returned an optimal abstraction with at least $2\times$ privacy than requested (i.e., the number of CIM queries of the optimal abstraction was at least $2\times$ larger than the threshold). We have performed the following experiments:
\begin{enumerate}[leftmargin=*,label=(\alph*),wide=0pt]
    \item {\em Runtime.~} The results are shown in \reva{Figure \ref{fig:exp-privacy-runtime}} and indicate that our solution remains scalable even for a large privacy threshold.
    
    \item {\em Optimal abstraction size.~} We use `Optimal abstraction size' to represent the number of abstraction tree edges used in the optimal abstraction. The results are shown in \reva{Figure \ref{fig:exp-privacy-edges}} and indicate that we do not need a much larger abstraction to get larger privacy. We can see here that for TPCH-Q21 whose runtime was the slowest, we need fewer edges than for the other queries.
    
    \item {\em Loss of information.~} We study the loss of information as a function of varying privacy threshold. The results are shown in \reva{Figure~\ref{fig:exp-privacy-loi}} and indicate that the loss of information increases as privacy increases, as expected.
\end{enumerate}

\begin{figure}[!htb]
\vspace{-2mm}
	\centering
    \begin{subfigure}{0.49\linewidth}
        \centering
		\includegraphics[width = \linewidth]{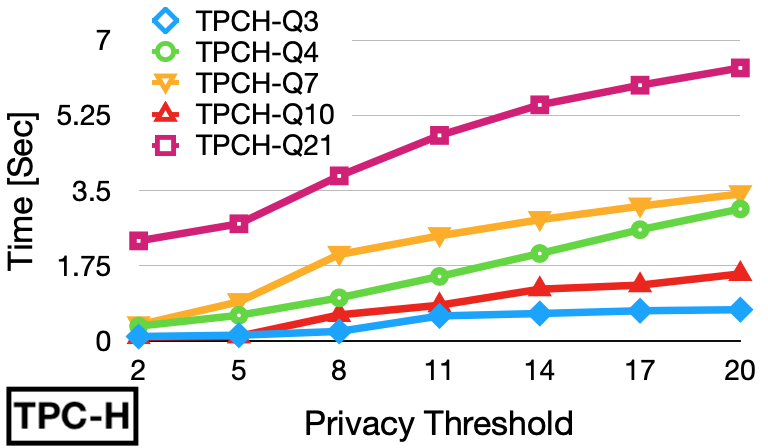}
	\end{subfigure}
	\hfill
	\begin{subfigure}{0.49\linewidth}
        \centering
		\includegraphics[width = \linewidth]{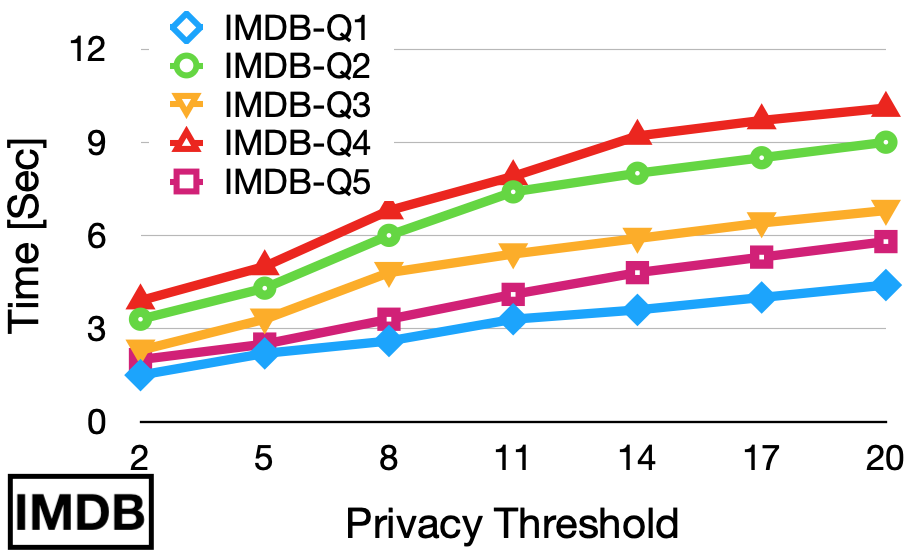}
	\end{subfigure}
	
	\caption{\reva{Runtime for varying number of privacy thresholds}}
	\label{fig:exp-privacy-runtime}
	\vspace{-6mm}
\end{figure}

\begin{figure}[!htb]
    \begin{subfigure}{0.49\linewidth}
        \centering
	    \includegraphics[width =    \linewidth]{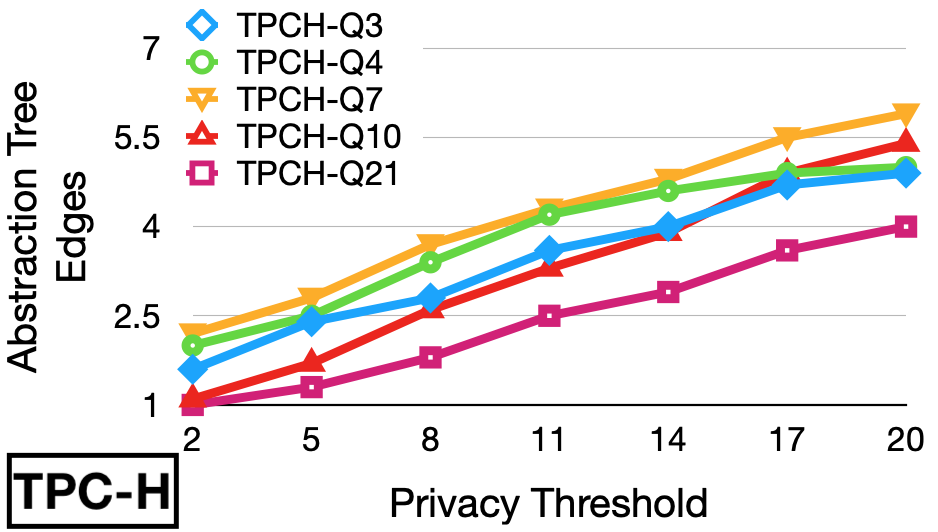}
	\end{subfigure}
	\begin{subfigure}{0.49\linewidth}
        \centering
	    \includegraphics[width =    \linewidth]{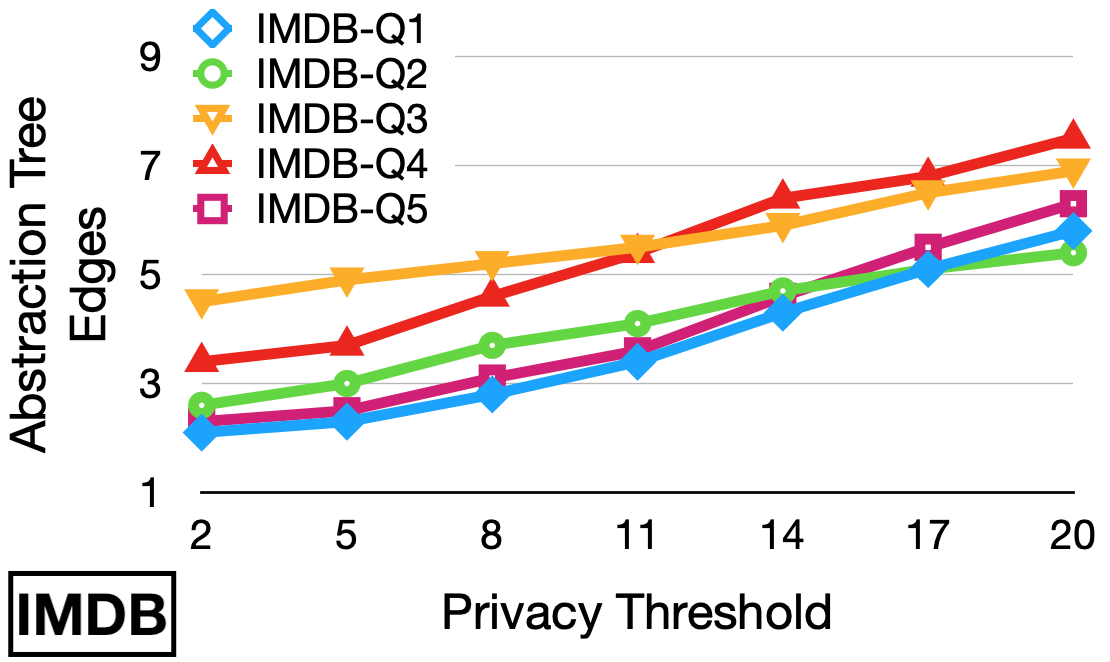}
	\end{subfigure}
	
	\caption{\reva{Optimal abstraction size for varying number of privacy thresholds}}
	\label{fig:exp-privacy-edges}
\end{figure}

\begin{figure}[!htb]
	\begin{subfigure}{0.49\linewidth}
        \centering
		\includegraphics[width = \linewidth]{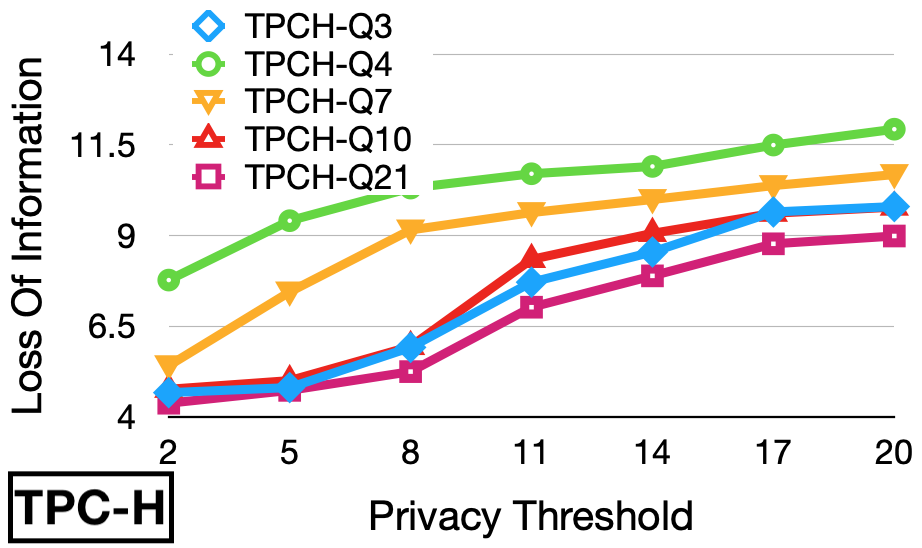}
	\end{subfigure}
	\begin{subfigure}{0.49\linewidth}
        \centering
		\includegraphics[width = \linewidth]{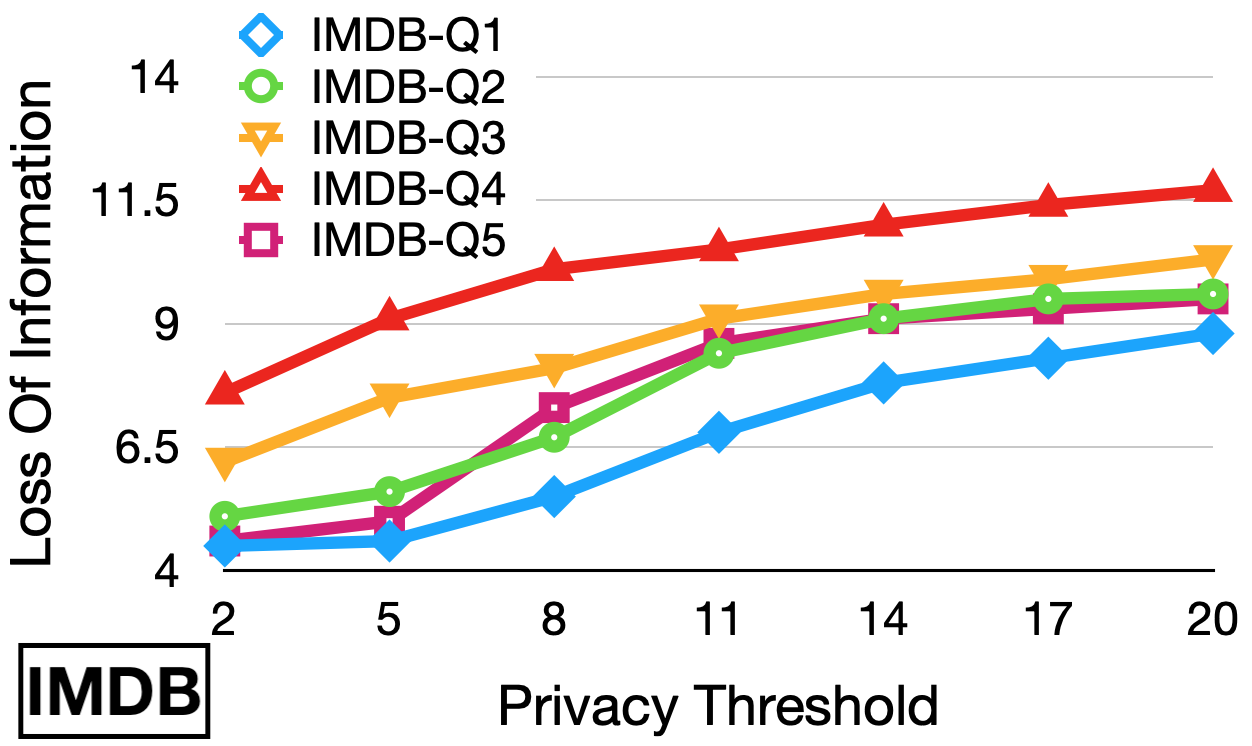}
	\end{subfigure}
	
	\caption{\reva{Loss of information for varying 
	privacy thresholds}}
	\label{fig:exp-privacy-loi}
	\vspace{-2mm}
\end{figure}

{\bf Abstraction tree size.~}
For this experiment we have increased the number of leaves in the tree from 10K to 810K. We have performed the following experiments:
\begin{enumerate}[leftmargin=*,label=(\alph*),wide=0pt]
    \item {\em Runtime.~} The results are shown in \reva{Figure \ref{fig:exp-tree-size-runtime}}. Our solution remains scalable even when the size of the abstraction tree nears the size of the data. We observed a similar trend when the tree size reached the data size. TPC-H queries Q3, Q5 and Q10 were faster than the rest since they have one `lineitem' atom which is connected to the rest of the query by a single attribute, as opposed to the other queries. Hence, there are fewer restrictions on these queries in terms of connectivity, making it easier to find CIM queries.
    
    \item {\em Optimal abstraction size.~} The results are shown in \reva{Figure \ref{fig:exp-tree-size-edges}} and indicate that when the abstraction tree is larger, the optimal abstraction requires fewer edges. The reason for this is that when the abstraction tree is larger there are more concretizations for each abstraction, and then the privacy can be larger for such abstractions. Here we have not directly measured Loss of Information since it depends on the tree structure which is varied here. 
\end{enumerate}

\begin{figure}[!htb]
	\centering
    \begin{subfigure}{0.49\linewidth}
        \centering
		\includegraphics[width = \linewidth]{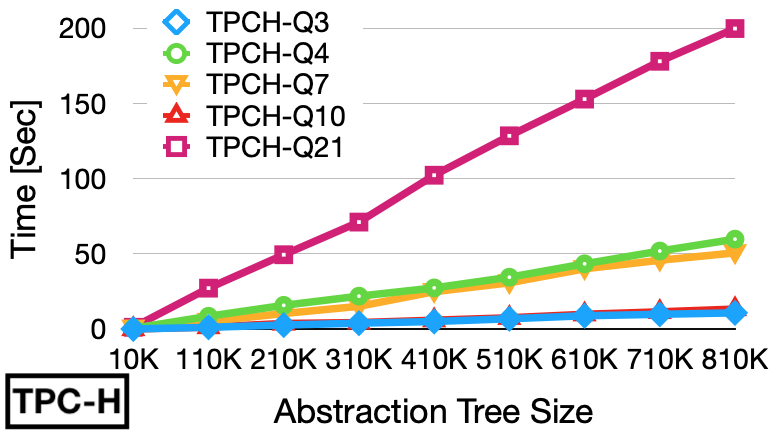}
	\end{subfigure}
	\hfill
	\begin{subfigure}{0.49\linewidth}
        \centering
		\includegraphics[width = \linewidth]{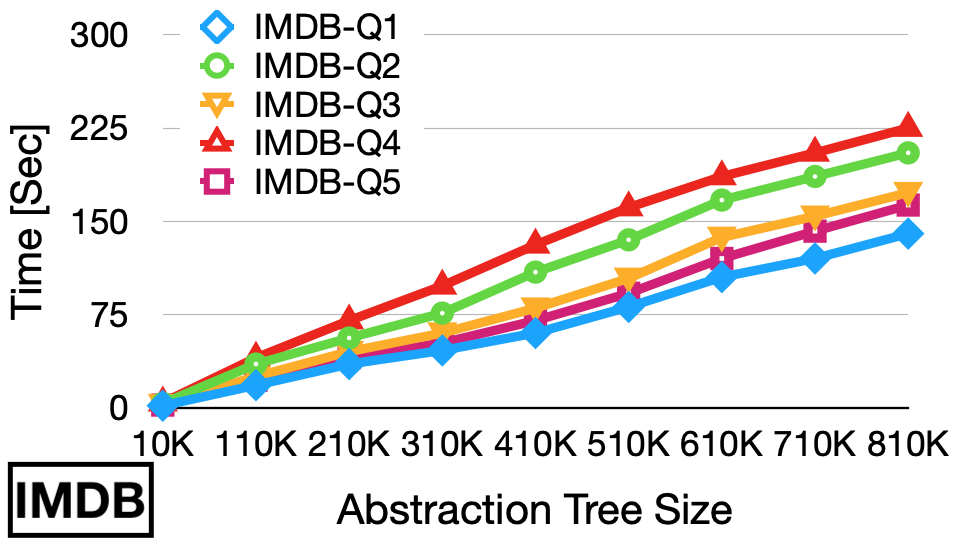}
	\end{subfigure}
    
	\caption{\reva{Runtime for varying abstraction tree size}}\label{fig:exp-tree-size-runtime}
	\vspace{-3mm}
\end{figure}

\begin{figure}[!htb]
	\centering
    \begin{subfigure}{0.49\linewidth}
        \centering
    	\includegraphics[width = \linewidth]{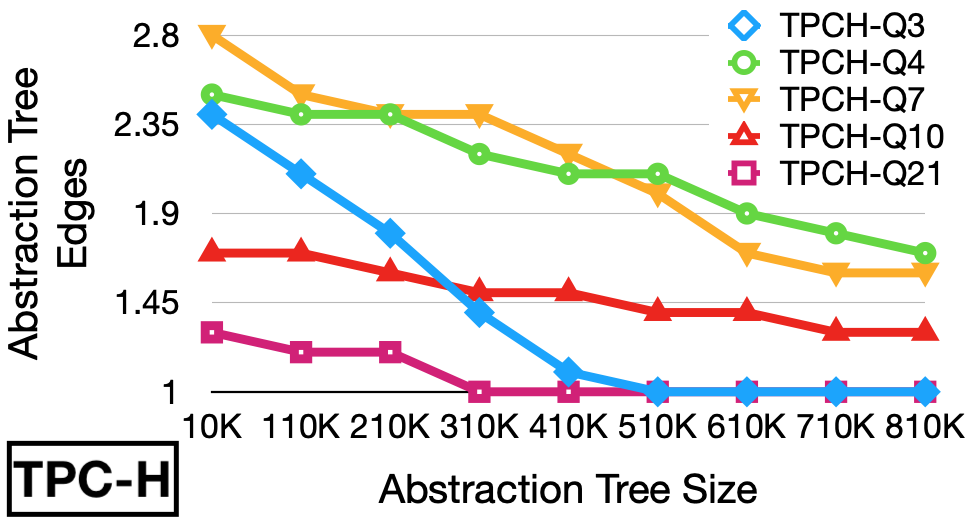}
    \end{subfigure}
	\hfill
	\begin{subfigure}{0.49\linewidth}
        \centering
    	\includegraphics[width = \linewidth]{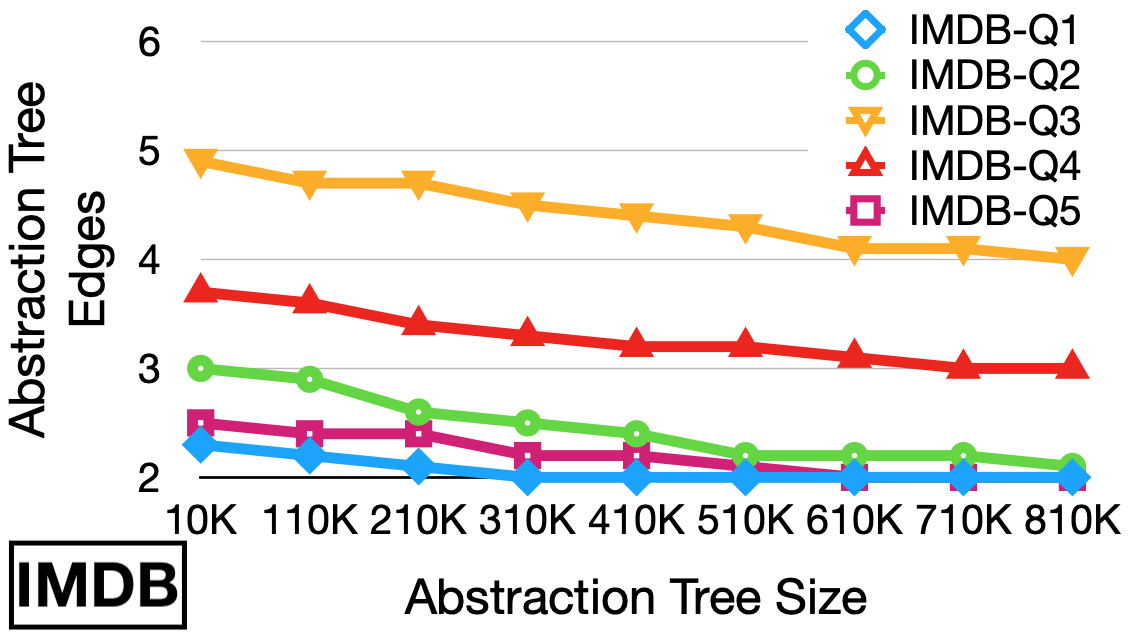}
    \end{subfigure}
    
	\caption{\reva{Optimal abstraction size for varying 
	tree size}}\label{fig:exp-tree-size-edges}
\end{figure}

{\bf Abstraction tree height.~}
We next examined the abstraction tree height. We have performed the following experiments:
\begin{enumerate}[leftmargin=*,label=(\alph*),wide=0pt]
    \item {\em Runtime.~} The results are shown in \reva{Figure \ref{fig:exp-tree-height-runtime}}. Interestingly, we noticed that every query has an optimal height for which the runtimes are the fastest (e.g., for TPCH-Q7, the optimal height is 5). Particularly, there is no trend of the sort ``higher tree implies longer runtime to find the optimal abstraction". Instead, the tree height that yields the fastest runtime is dependent on the query structure.
    \item {\em Optimal abstraction size.~} The results are shown in \reva{Figure \ref{fig:exp-tree-height-edges}} and indicate that the optimal abstraction size increases when the tree height increases.
\end{enumerate}
We have observed that different queries require traversing a different number of concretizations to achieve the desired privacy. If the query is relatively simple (e.g., TPCH-Q4) it needs less and if the query is relatively complicated (e.g., TPCH-Q21) it needs more. 
On the one hand, if the tree is not sufficiently high, every abstraction has more concretizations than we need, so the runtime will be slower. On the other hand, if the tree is too high, every abstraction has fewer concretizations than we need, so we have to scan more abstractions to find a solution and the runtime will also be slower.

\begin{figure}[!htb]
	\centering
	\begin{subfigure}{0.49\linewidth}
        \centering
	    \includegraphics[width =    \linewidth]{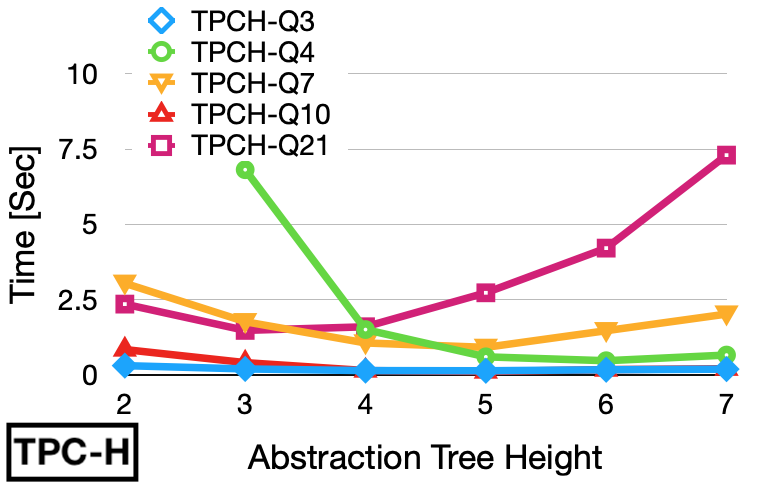}
	\end{subfigure}
	\hfill
	\begin{subfigure}{0.49\linewidth}
        \centering
	    \includegraphics[width =    \linewidth]{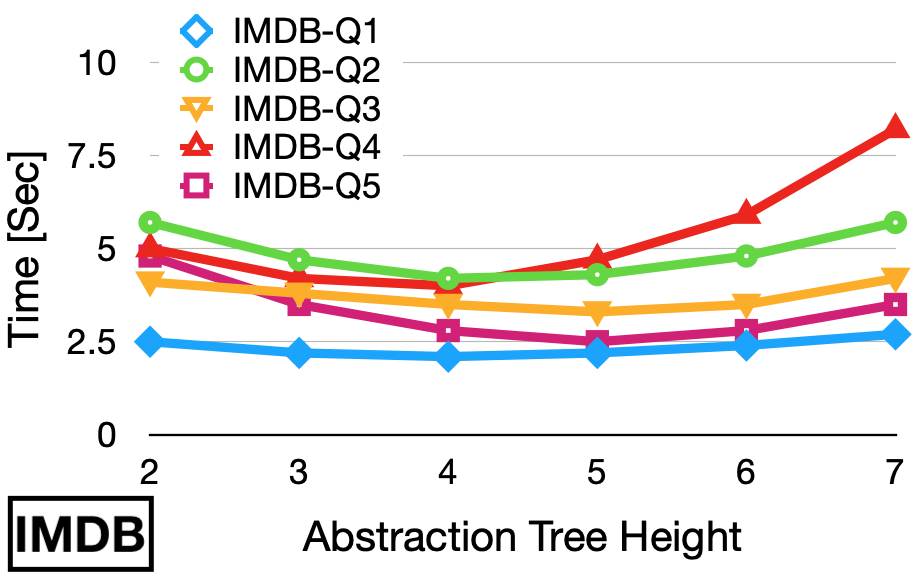}
	\end{subfigure}
	
	\caption{\reva{Runtime for varying abstraction tree height}}\label{fig:exp-tree-height-runtime}
	\vspace{-7mm}
\end{figure}

\begin{figure}[!htb]
	\centering
	\begin{subfigure}{0.49\linewidth}
        \centering
		\includegraphics[width = \linewidth]{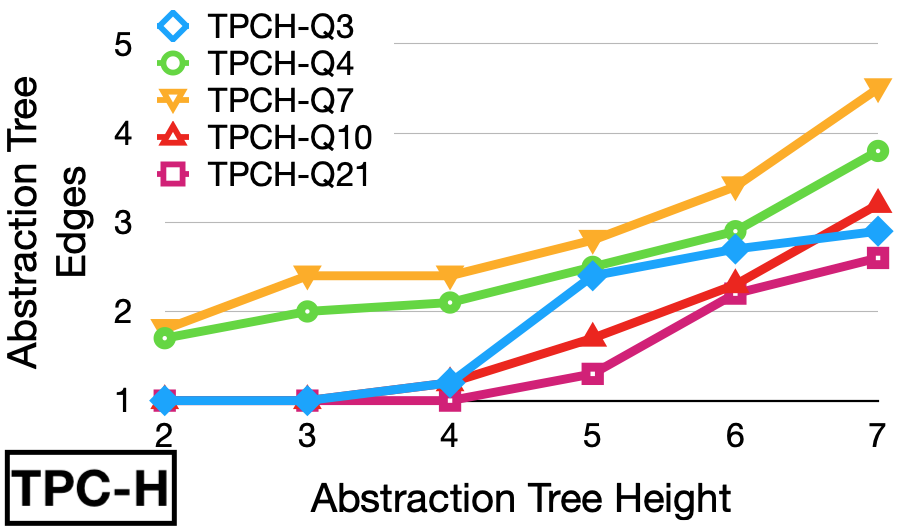}
	\end{subfigure}
	\hfill
	\begin{subfigure}{0.49\linewidth}
        \centering
		\includegraphics[width = \linewidth]{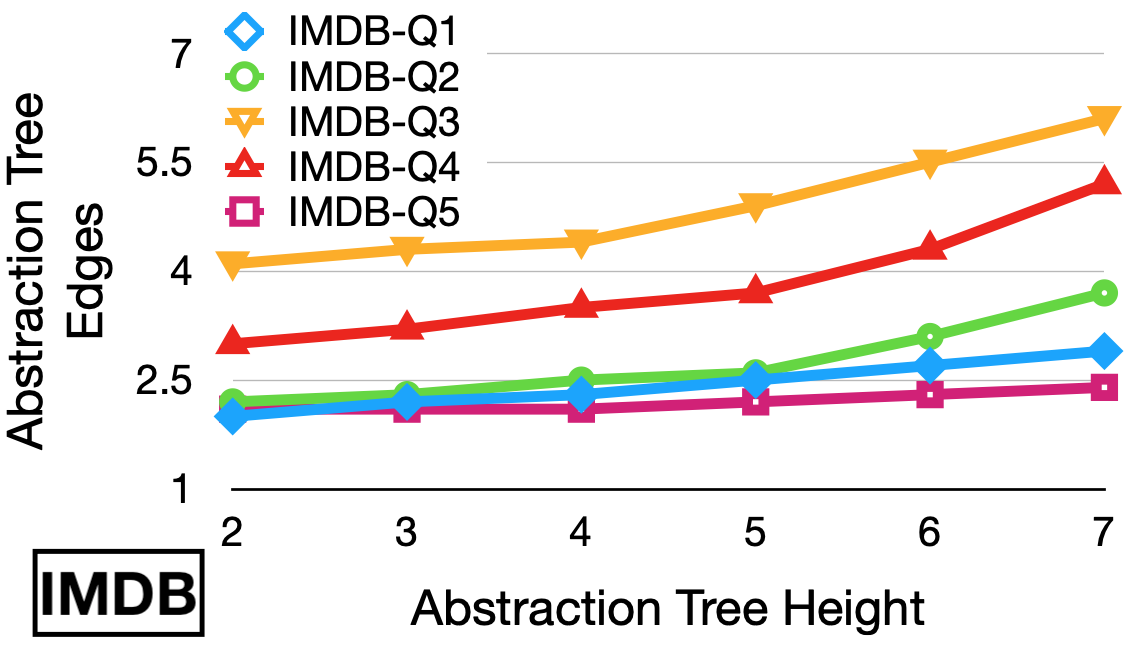}
	\end{subfigure}
	
	\caption{\reva{Optimal abstraction size for varying 
	tree height}}\label{fig:exp-tree-height-edges}
\end{figure}

{\bf Number of query joins (query complexity).~}
In this experiment we used TPC-H queries Q5, Q7, Q9, Q21 and \reva{IMDB queries Q2, Q4, Q7} (as this is the subset of queries with at least 6 joins) and examined the change in runtime as we increase the number of joins in each. We do so by starting with a version of the queries with only 3 joins and adding an atom for each tick on the X axis. 
The results (depicted in \reva{Figure \ref{fig:exp-joins-runtime}}) show that the runtime is not significantly affected by the increase in the number of joins. 

\begin{figure}[!htb]
	\centering
	\begin{subfigure}{0.49\linewidth}
        \centering
	    \includegraphics[width = \linewidth]{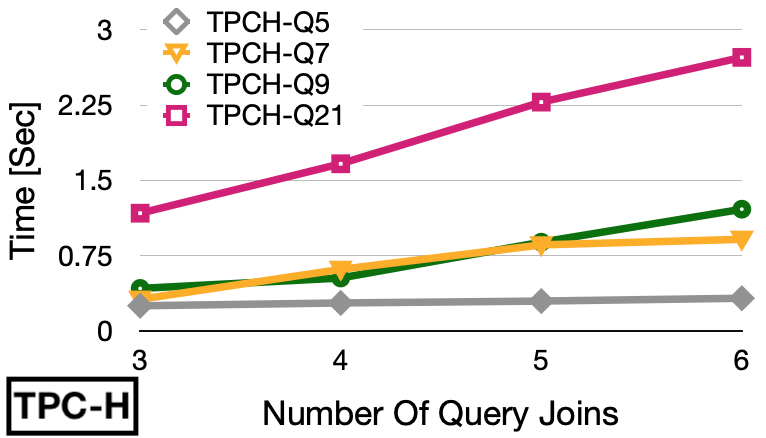}
	\end{subfigure}
	\hfill
	\begin{subfigure}{0.49\linewidth}
	    \centering
	    \includegraphics[width = \linewidth]{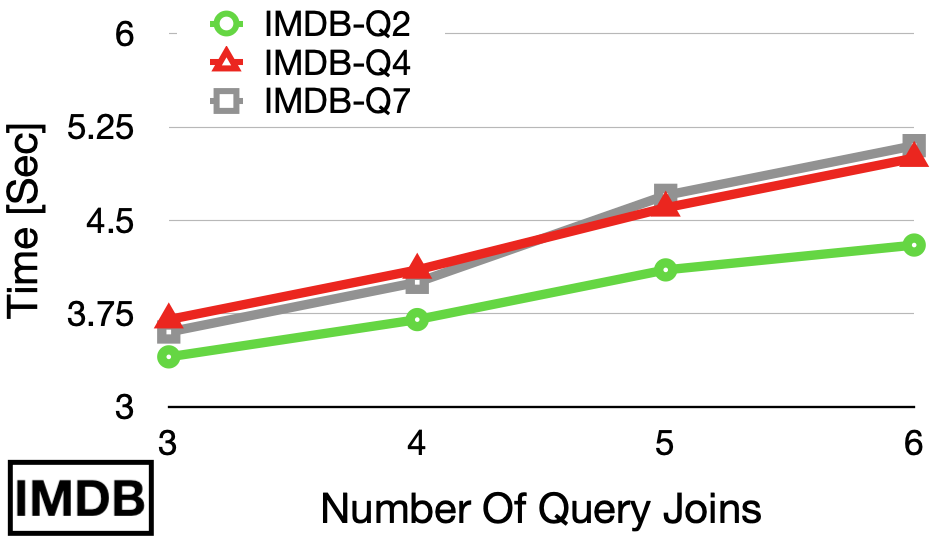}
	\end{subfigure}
	
	\caption{\reva{Runtime for varying number of joins}}\label{fig:exp-joins-runtime}
\end{figure}

{\bf \ex\ rows.~}
We examine our scalability in terms of increased the number of rows in the \ex. The results (shown in \reva{Figure \ref{fig:exp-ex-rows-runtime}}) indicate that the number of rows is a determining factor in the runtime of our algorithm. This is because a large number of rows implies fewer CIM queries for each concretization (since each row must be connected). Therefore, the algorithm was forced to try all possible (exponentially many) abstractions, similarly to the brute force approach, which significantly worsened the runtime.
In particular, for TPCH-Q21, the algorithm had to examine a large number of abstractions since this query includes three joined atoms with the `lineitem' relation, where each of them can be abstracted.

\begin{figure}[!htb]
	\centering
	\begin{subfigure}{0.49\linewidth}
        \centering
	    \includegraphics[width = \linewidth]{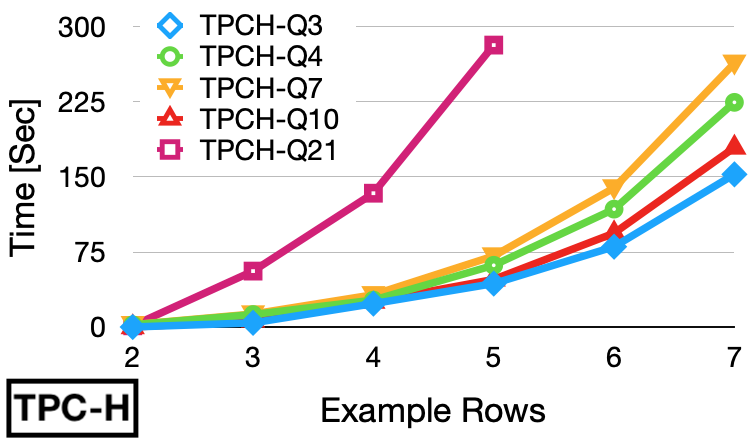}
	\end{subfigure}
	\hfill
	\begin{subfigure}{0.49\linewidth}
        \centering
	    \includegraphics[width = \linewidth]{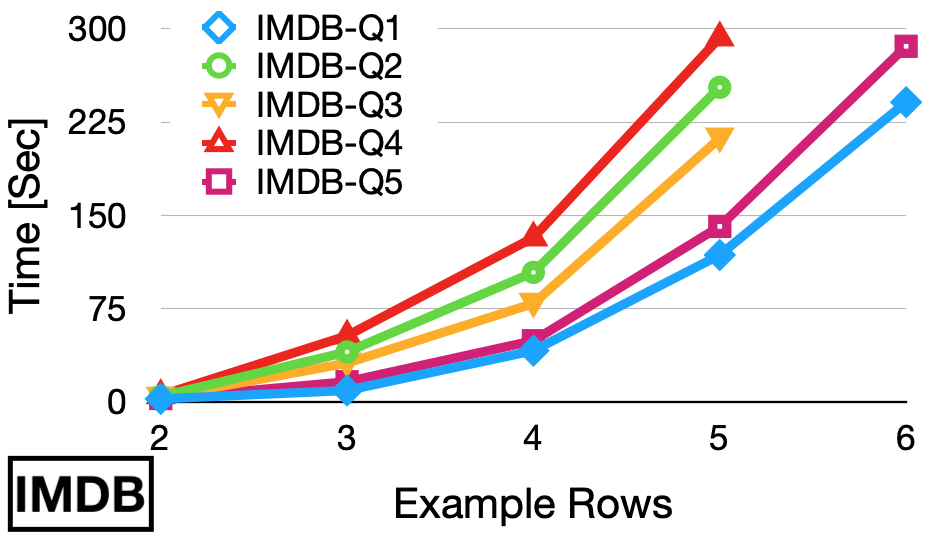}
	\end{subfigure}
	
	\caption{\reva{Runtime for varying \ex\ rows}}\label{fig:exp-ex-rows-runtime}
 	\vspace{-3mm}
\end{figure}



{\bf Loss of information distribution.~}
We have conducted all of the experiments for two loss of information distributions. The first is entropy with discrete uniform distribution and the second is entropy with random distribution (Section \ref{subsec:loss}). We found that on average, the runtimes are not affected by different distributions. As the probabilities change, the optimal abstraction for one distribution may not necessarily be the optimal for the other one. For example, if there is another abstraction with the same privacy, it may now have a smaller loss of information and will be the new optimal one.

\revb{{\bf Comparing to a different abstraction approach.~}
The notion of abstraction trees was presented in \cite{DeutchMR19}, where the goal of the abstraction was reducing the provenance size.
We used this approach to construct an alternative algorithm for our problem. Since the framework of \cite{DeutchMR19} was not designed to achieve privacy, we used it as a black-box, which we executed multiple times with a decreasing target provenance size, until we met the desired privacy threshold.
We compared the loss of information incurred by our algorithm to that of \cite{DeutchMR19}. The results are shown in Figure~\ref{fig:exp-compression}.
The compression-based approach of \cite{DeutchMR19} unnecessarily increases the loss of information by approximately $2\times$ to $3\times$ to achieve the same privacy as our approach.

\begin{figure}[!htb]
	\centering
	\begin{subfigure}{0.49\linewidth}
        \centering
	    \includegraphics[width = \linewidth]{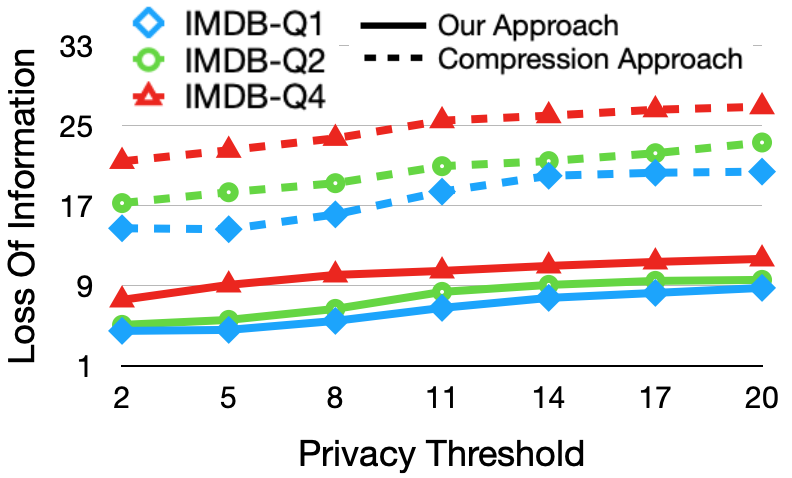}
	\end{subfigure}
	\caption{\revb{Loss of information for varying privacy thresholds, for our approach and the approach from \cite{DeutchMR19}}}\label{fig:exp-compression}
 	\vspace{-2mm}
\end{figure}

}

{\bf Effect of each algorithm component.~}
We now present the effects on the execution time of the five algorithm components we have detailed in Section \ref{sec:algo-high-level}, compared to a brute-force approach. 
\reva{The effect of each component is measured as a standalone optimization.}
Figure \ref{fig:improvements} shows the results for each component.
Referring to the names of the components in Section \ref{sec:algo-high-level}, `Sorting the abstractions' and `Prioritizing loss of information over privacy computation' have improved performance by a factor of over $500\times$. The third component of `Computing privacy row by row' has improved performance by approximately $2\times$ to $4\times$ for a \ex\ with three rows. For a \ex\ with four rows, it improved performance by approximately $10\times$ to $100\times$. For \ex\ with more than five rows we were unable to find a solution to the problem in a reasonable time using the brute force approach, in contrast to our approach. The fourth component, `Concretizations connectivity', has improved performance by approximately $1.5\times$ to $1.8\times$ when we filtered out about $60\%$ of the concretizations. The last component, `Caching information about concretizations and queries', has improved performance by approximately $1.5\times$~to~$4\times$.

\begin{figure}[!htb]
    \centering 
	\includegraphics[width=0.7\linewidth] {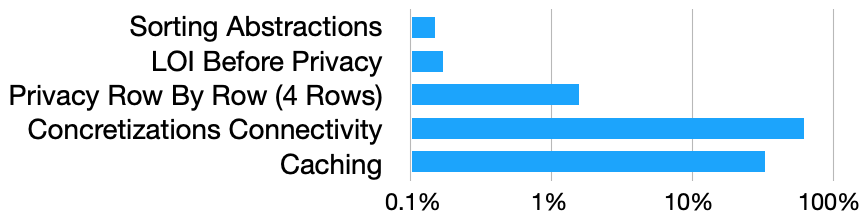}
	\caption{Effect of each of algorithm component from Section \ref{sec:algo-high-level} as compared to the brute force approach (brute force execution time is marked by 100\%)}
    \label{fig:improvements}
    \vspace{-4mm}
\end{figure}

\common{

\begin{table}[h]
    \centering \footnotesize
    \caption{\common{User Study Results Summary}}\label{tbl:user-study-results}
    \begin{tabularx}{\linewidth}{| X | c | c | c | c | c |}
        \cline{2-3}\multicolumn{1}{c|}{} & 
        \begin{tabular}{@{}c@{}}{\bf Group A}\end{tabular} & 
        \begin{tabular}{@{}c@{}}{\bf Group B}\end{tabular} \\
        \hline Number of group members that were able to find the original query & 6/6 (100\%) & 0/6 (0\%) \\
        \hline Number of correct answers in hypothetical questions (on average) & 9.6/10 (96\%) & 8.5/10 (85\%) \\
        \hline
    \end{tabularx}
\end{table}

{\bf User Study:~} \label{sec:user-study}
We have conducted a user study,
involving 12 users with knowledge of databases. 
The users were randomly divided into two groups of equal size: control group (Group A) and treatment group (Group B). 
We used IMDB-Q3 (all the actors who played in a movie with the actor Kevin Bacon), the IMDB abstraction tree, 2 rows of output, and a privacy threshold of 2. Then, with Algorithm \ref{algo:optimal-abs} we found the optimal abstraction.
Group A was given the output with the original provenance while group B received the output with the abstracted provenance and the abstraction tree. 
The users were given two tasks: (1) Infer the underlying query from the original (Group A)/abstracted (Group B) provenance and (2) Answer $10$ hypothetical questions regarding the effect of deleting rows (e.g., regarding action movies) from the database on the query result. 
The study results are summarized in Table~\ref{tbl:user-study-results}. 

For the first task, all members of group A and none of the members of group B were able to identify the original query.
For the second task, the members of group A were able to answer on average 9.6 out of 10 questions correctly, while the members of group B were able to answer on average 8.5 out of 10 questions. This shows a reasonable loss of information. The breakdown of correct answers is shown in Figure \ref{fig:user-study-questions} and indicates the following conclusions.
In most cases, the abstracted provenance has provided enough information to answer the question. For example, for question Q6, which considers the effect of the removal of all comedy movies released after 1980, the abstracted provenance could be used to determine the correct answer. This is because the abstracted value that replaced the relevant tuple was ``comedy movie released in 1990--2000''.
In some cases, there were a few mistakes due to misunderstandings or lack of concentration.
In contrast, naturally, there were cases where the abstracted provenance was not detailed enough to answer the question. For instance, question Q9, that refers to a case where directors born before 1970 are removed from the database. The abstracted provenance indicated that the output is related to a person born between 1950 to 1960, but not to the person's role in the movie (actor or director), thus the members of group B were unable to answer the question.
Overall, our user study indicates that our method was successful in hiding the original query and incurred a reasonable loss of information in terms of using provenance.


\begin{figure}[!htb]
    \centering
	\includegraphics[width=0.65\linewidth] {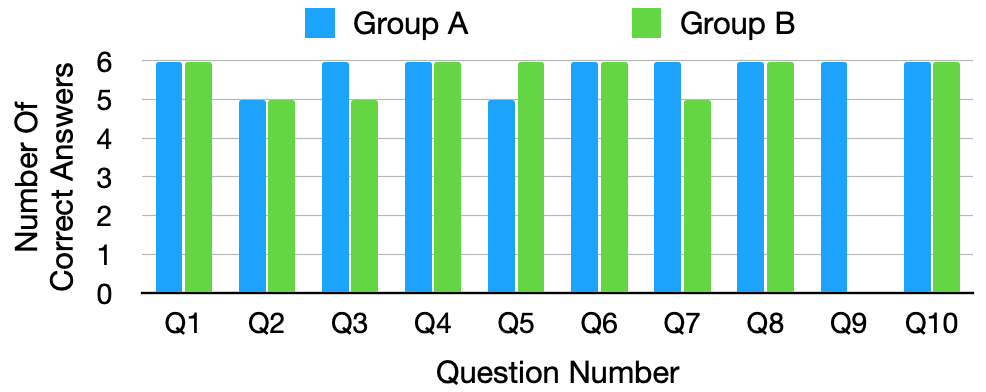}
	\caption{\common{Breakdown of correct answers in hypothetical questions of the user study}}
    \label{fig:user-study-questions}
\end{figure}
\vspace{-10mm}
}

\section{related work}\label{sec:related}
We next review previous work in the fields of
provenance and privacy, highlighting our novelty. 


There is a wealth of works on data provenance and its uses, including
relational algebra, XML query languages, Nested
Relational Calculus, and functional programs (see e.g.,
\cite{trio,GKT-pods07,Userssemiring1,CheneyProvenance,ProvenanceBuneman,Olteanu12,GS13,Tan03} and a survey \cite{HerschelDL17}). 
These works have generally focused on provenance modeling, efficient tracking and storage, and algorithms that use provenance for different applications. As such, they are orthogonal to our work: extending our solutions to additional query and provenance formalisms proposed in these works is an important challenge for future work.

The area of privacy and security in the context of provenance has been explored by various works \cite{DavidsonKMPR11,DavidsonKRSTC11,DavidsonKTRCMS11,BertinoGKNPSSTX14,SanchezBS18,Ruan0DLOZ19,AndersonC12,TanKH13,Cheney11}. 
These works have focused on privacy and security in different settings than ours such as IoT \cite{SanchezBS18}, Blockchain \cite{Ruan0DLOZ19} and workflows \cite{DavidsonKMPR11,DavidsonKRSTC11,DavidsonKTRCMS11}, while our focus was the relational setting. The difference in the setting is reflected in the provenance models (we focus on provenance polynomials whereas, e.g., \cite{DavidsonKMPR11} focuses on workflow provenance in the form of input-output relationship between modules). In turn, the technical problems and solutions are inherently different.

A recent work on fine-grained provenance privacy \cite{DeutchG19} has focused on learning queries from \ex s where the provenance is given in different semirings \cite{GKT-pods07,Greenicdt09}. 
It showed that reducing the granularity of the provenance by using less detailed semirings (which may be seen as an alternative to our approach of abstracting provenance expressions) is inadequate for privacy purposes: it does not introduce significant added difficulty when attempting to reverse-engineer the underlying query.

In \cite{DavidsonKMPR11,DavidsonKRSTC11,DavidsonKTRCMS11} the authors studied workflow privacy, with a privacy criterion inspired by $l$-diversity \cite{MachanavajjhalaGKV06} and $k$-anonymity~\cite{KAnonymity2002}. 
This model achieves privacy by obfuscating entire attributes of a relation that represents a workflow.
In contrast, we do not focus on black-box modules, but rather on detailed fine-grained provenance obtained from queries. This makes the technical results of these works inapplicable to our setting. 
The work of~\cite{Cheney11} has described an abstract framework for provenance security and defines the notions of the disclosure and obfuscation properties of provenance. Given a query and two traces, the problem is then to determine whether the output of the query is equal on these two traces, if they have the same provenance view. 
A prominent difference from our model is the assumption that the underlying query is known which makes the problem definition and solution fundamentally different. 



Previous work on abstracting provenance has primarily focused on workflow provenance abstractions and graph abstractions \cite{CheneyP14,CadenheadKKT11,DeyZL11,ChebotkoCLFY08,BitonBDH08,ChebotkoLCFY10,DeutchMR19}, mainly for the purpose of reducing the provenance size and/or optimizing its generation.  
Security Views \cite{ChebotkoCLFY08} is a framework for access control where users can specify the desired security of the components of a scientific workflow. The framework then omits the inaccessible components from the provenance view. ZOOM \cite{BitonBDH08} abstracts the provenance view by grouping models together allowing users to focus only on the relevant part of the workflow, and ProPub \cite{DeyZL11} allows users to publish provenance while anonymizing, abstracting, or hiding parts of the provenance graph. Here again, the  models (coarse-grained workflow provenance models) and problems that are studied in these works significantly differ from those of the present work.

\revb{
Query reverse-engineering from output examples \cite{Shen,KalashnikovLS18,joinQueries,qbo,TanZES17} attempts to assist users who lost access to the original query or want an automatic system to infer a query based on output examples. In the context of our work, such systems may be of use in the computation of privacy when the provenance is given in the \lin\ semiring, as mentioned in Section \ref{sec:algo}. This is an intriguing subject of future work.
}


\section{Conclusion and Limitations}
\label{sec:conc}

We have proposed in this paper a novel framework for striking a balance between utility and privacy when releasing data provenance. The framework is based on obfuscating provenance by identifying annotations appearing in it, thereby hiding to some extent the query whose execution has yielded the provenance. This kind of obfuscation may be done in many ways, and we aim at choosing the optimal one. The resulting problem is NP-hard, yet we have provided practically effective heuristics.

There are many important directions for future work. 
First, our work assumes an abstraction tree as input, which may not be readily given. (Semi-)automatic inference of abstraction trees, as briefly discussed in the paper, is an important complementary problem. 
Second, our loss-of-information model relies on a probability distribution over the leaves, and in our experiments, we have mostly assumed a uniform distribution; we intend to study means for inferring probabilities, as well as other weight-based models for loss of information.  Third, our model is tailored to the provenance semiring model; studying provenance obfuscation in the context of other provenance models is another intriguing goal for future research. 

\clearpage

\bibliographystyle{abbrv}
\bibliography{bibtex.bib}

\end{document}